\newcommand{\hy}{\hbox{-}\nobreak\hskip0pt}
\newcommand{\CCC}{\mathcal{C}}
\newcommand{\EEE}{\mathcal{E}}
\newcommand{\SSS}{\mathcal{S}}
\newcommand{\SB}{\{\,}
\newcommand{\SM}{\;{:}\;}
\newcommand{\SE}{\,\}}
\newcommand{\Card}[1]{|#1|}
\newcommand{\nil}{{\{\}}}
\newcommand{\NP}{\text{\normalfont  NP}}
\newcommand{\coNP}{\text{\normalfont co-NP}}
\newcommand{\W}[1][xxxx]{\text{\normalfont W[#1]}}
\let\phi=\varphi 
\renewenvironment{proof}{\begin{pf}}{\qed\end{pf}}
\newcommand{\class}[1]{\text{\text{\normalfont\sc  #1}}}
\newcommand{\SCHAEFER}{\text{\normalfont\sffamily{Schaefer}}}
\newcommand{\SUBSOLVER}{\mathnormal{\text{\normalfont\sffamily{Subsolver}}}}
\newcommand{\SAT}{\class{SAT}}
\newcommand{\THREESAT}{\class{3SAT}}
\newcommand{\HORN}{\class{Horn}}
\newcommand{\ZEROV}{\class{$0$-Val}}
\newcommand{\ONEV}{\class{$1$-Val}}
\newcommand{\RHORN}{\class{RHorn}}
\newcommand{\TWOCNF}{\class{2CNF}}
\newcommand{\UP}{\class{UP}}
\newcommand{\PL}{\class{PL}}
\newcommand{\CLU}{\class{Clu}}
\newcommand{\FOREST}{\class{Forest}}
\newcommand{\wb}{\mbox{\normalfont wb}}
\renewcommand{\sb}{\mbox{\normalfont sb}}
\newcommand{\db}{\mbox{\normalfont db}}
\newcommand{\set}[1]{\left\{ #1 \right\}}
\newcommand{\myiff}{if and only if\xspace}
\newcommand{\mywrt}{with respect to\xspace}
\newcommand{\ie}{i.e.\xspace}
\newcommand{\etal}{\emph{et al.}\xspace}
\newcommand{\var}{\mbox{var}}
\newcommand{\ol}[1]{\overline{#1}}
\def\@maketitle{\newpage
 \markboth{}{}%
 \def\lastand{\ifnum\value{@inst}=2\relax
                 \unskip{} \andname\
              \else
                 \unskip \lastandname\
              \fi}%
 \def\and{\stepcounter{@auth}\relax
          \ifnum\value{@auth}=\value{@inst}%
             \lastand
          \else
             \unskip,
          \fi}%
 \begin{center}%
   {\Large \bfseries\boldmath
     \pretolerance=10000
     \@title \par}\vskip .8cm
   \if!\@subtitle!\else {\large \bfseries\boldmath
     \vskip -.65cm
     \pretolerance=10000
     \@subtitle \par}\vskip .8cm\fi
   \setbox0=\vbox{\setcounter{@auth}{1}\def\and{\stepcounter{@auth}}%
     \def\thanks##1{}\@author}%
   \global\value{@inst}=\value{@auth}%
   \global\value{auco}=\value{@auth}%
   \setcounter{@auth}{1}%
   {\lineskip .5em
     \noindent\ignorespaces
     \@author\vskip.35cm}
   {\small\institutename}
   {\vskip.70cm
     \noindent\ignorespaces
%     {\em For Mike, profound thinker, pioneer, and treasured friend.}
     {\em Dedicated to Mike Fellows on the occasion of his 60th birthday.}
     \vskip.35cm}

 \end{center}%
}
\begin{document}%=======================================================

\title{Backdoors to Satisfaction%
\thanks{Research supported by the European Research Council (ERC), project COMPLEX REASON 239962.}}

\author{Serge Gaspers \and Stefan Szeider}%\inst{1}}

\institute{
  Institute of Information Systems,
  Vienna University of Technology,
  A-1040 Vienna, Austria
  \email{gaspers@kr.tuwien.ac.at, stefan@szeider.net}
}

\maketitle

\begin{abstract}
  A backdoor set is a set of variables of a propositional
  formula such that fixing the truth values of the variables in the
  backdoor set moves the formula into some polynomial-time decidable
  class. If we know a small backdoor set we can reduce the question of
  whether the given formula is satisfiable to the same question for
  one or several easy formulas that belong to the tractable class
  under consideration. 
  %Thus a backdoor set provides clever reasoning
  %shortcut through the search space.  
  In this survey we review parameterized complexity results for
  problems that arise in the context of backdoor sets, such as the
  problem of finding a backdoor set of size at most $k$, parameterized
  by $k$. We also discuss recent results on backdoor sets for problems
  that are beyond~NP.
\end{abstract}

\section{Introduction}

Satisfiability (SAT) is the classical problem of
determining whether a propositional formula in conjunctive normal form
(CNF) has a satisfying truth assignment.
%  S.A. Cook's famous proof
The famous Cook-Levin Theorem \cite{Cook71,Levin73}, stating that SAT is
NP-complete, placed satisfiability as the cornerstone of complexity
theory.  Despite its seemingly specialised nature, satisfiability has
proved to be extremely useful in a wide range of different disciplines,
both from the practical as well as from the theoretical point of view.
Satisfiability provides a powerful and general formalism for solving
various important problems including hardware and software verification
and planning
\cite{BjesseLeonardMokkedem01,PrasadBiereGupta05,VelevBryant03,KautzSelman92}.
Satisfiability is
the core of many reasoning problems in automated deduction; for
instance, the package dependency management for the OpenSuSE Linux
distribution and the autonomous controller for NASA's Deep Space One
spacecraft are both based on satisfiability~\cite{LeberreParrain08,Weld99}.
% New developments
% in theoretical and practical aspects of satisfiability are presented
% at an annual international conference (see
% www.satisfiability.org). Part of this conference is the SAT
% Competition where state-of-the-art SAT solvers compete on benchmark
% sets of formulas.
Over the last two decades, SAT-solvers have become amazingly
successful in solving formulas with hundreds of thousands of variables that encode
problems arising from various application areas, see, e.g.,
\cite{GomesKautzSabharwalSelman08}.  Theoretical performance
guarantees, however, are far from explaining this empirically observed
efficiency. In fact, there is an enormous gap between theory and
practice.  To illustrate it with numbers, take the exponential factor
%$1.3303^n$ of the currently fastest known exact 3SAT algorithm~\cite{MakinoTamakiYamamoto11}.
$1.308^n$ of the currently fastest known exact 3SAT algorithm~\cite{Hertli11}.
Already for $n=250$ variables this number exceeds by far the expected
lifetime of the sun in nanoseconds.

% lifetime of the sun in nanoseconds (11-12 billion years). I scaled $n$ to $250$.}
% wolframalpha: expected lifetime of sun in nanoseconds = 3.2 x 10^{17}
% 

\paragraph{Hidden Structure and Parameterized Complexity} The discrepancy between theory and
practice can be explained by the presence of a certain ``hidden
structure'' in real-world problem instances.  It is a widely accepted
view that the structure of real-world problem instances makes the
problems easy for heuristic solvers. However, classic worst-case
analysis is not particularly well-suited to take this hidden structure
into account.  The classical model is one-dimensional, where only one
aspect of the input (its size in bits, or the number of variables for
a SAT formula) is taken into account, and it does not differentiate whether or not the instance is
otherwise well-structured.

Parameterized Complexity, introduced by Mike Fellows together with Rod
Downey offers a two-dimensional theoretical setting. The first dimension
is the input size as usual, the second dimension (the parameter) allows
to take structural properties of the problem instance into account. The
result is a more fine-grained complexity analysis that has the potential
of being more relevant to real-world computation while still admitting a
rigorous theoretical treatment and firm algorithmic performance
guarantees.

There are various ways of defining the ``hidden structure'' in a problem
instance, yielding various ways to parameterize a problem.

\paragraph{Islands of Tractability} One way of coping with the high
complexity of important problems within the framework of classical
complexity is the identification of tractable sub-problems, \ie, of
classes of instances for which the problem can be solved in
polynomial time. Each class represents an ``island of tractability''
within an ocean of intractable problems.  For the
satisfiability problem, researchers have identified dozens of such
islands -- one could speak of an archipelago of tractability.
 
Usually it is quite unlikely that a real-world instance belongs to a
known island of tractability, but it may be close to one. A very natural
and humble way of parameterizing a problem is hence to take the distance
to an island of tractability as a parameter.  Guo \etal
\cite{GuoHuffnerNiedermeier04} called this approach ``distance to
triviality''. For SAT, the distance is most naturally measured in terms
of the smallest number variables that need to be instantiated or deleted
such that the instance gets moved to an island of tractability. Such
sets of variables are called \emph{backdoor sets} because once we know a
small backdoor set we can solve the instance efficiently. Thus backdoor
sets provide a ``clever reasoning shortcut'' through the search space
and can be used as an indicator for the presence of a hidden structure
in a problem instance.  Backdoor sets where independently introduced by
Crama \etal \cite{CramaEkinHammer97} and by Williams
\etal~\cite{WilliamsGomesSelman03}, the latter authors coined the term
``backdoor''.

The backdoor set approach to a problem consists of two steps: first a
small backdoor set is computed (\emph{backdoor detection}), second the
backdoor set is used to solve the problem at hand (\emph{backdoor
  evaluation}).  It is hence natural to consider an upper bound on the
size of a smallest backdoor set as a parameter for both backdoor
detection and backdoor evaluation.

% \paragraph{Organisation of Paper}

% In this paper we survey parameterized complexity results around the
% backdoor approach. We start the survey with SAT where we introduce the
% main concepts and ideas which is our main focus. From there we branch
% out to problems related to SAT, and then briefly look at problems from
% nonmonotonic reasoning and abstract argumentation.

\section{Satisfiability}

The \emph{propositional satisfiability problem} (SAT) was the first
problem shown to be NP-hard \cite{Cook71,Levin73}. Despite its hardness, SAT
solvers are increasingly leaving their mark as a general-purpose tool in
areas as diverse as software and hardware verification, automatic test
pattern generation, planning, scheduling, and even challenging problems
from algebra~\cite{GomesKautzSabharwalSelman08}.

A \emph{literal} is a propositional variable $x$ or a negated variable
$\neg x$. We also use the notation $x=x^1$ and $\neg x= x^0$. A
\emph{clause} is a finite set literals that does not contain a
complementary pair $x$ and $\neg x$. A propositional formula in
conjunctive normal form, or \emph{CNF formula} for short, is a set of
clauses. An $r$CNF formula is a CNF formula where each clause
contains at most $r$ literals. For a clause $C$
we write $\var(C)=\SB x \SM x\in C$ or $\neg x \in C\SE$, and for a CNF
formula $F$ we write $\var(F)=\bigcup_{C\in F} \var(C)$.  An $r$-CNF
formula is a CNF formula where each clause contains at most $r$
literals. For a set $S$ of literals we write $\ol{S}=\SB x^{1-\epsilon}
\SM x^\epsilon \in S\SE$. We call a clause $C$ \emph{positive} if
$C=\var(C)$ and \emph{negative} if $\ol{C}=\var(C)$.

For a set $X$ of propositional variables we denote by $2^X$ the set of
all mappings $\tau:X\rightarrow \set{0,1}$, the truth assignments on $X$.
For $\tau\in 2^X$ we let 
$\text{true}(\tau)=\SB x^{\tau(x)}\SM x\in X\SE$ and
$\text{false}(\tau)=\SB x^{1-\tau(x)}\SM x\in X \SE$ be the sets of literals set
by $\tau$ to $1$ and $0$, respectively. 
Given a CNF formula $F$ and a truth assignment $\tau \in 2^X$ we define
$F[\tau]=\SB C\setminus \text{false}(\tau) \SM C\in F,\ C\cap\text{true}(\tau)=
\emptyset\SE$. If $\tau\in 2^{\set{x}}$ and $\epsilon=\tau(x)$, we simple
write $F[x=\epsilon]$ instead of $F[\tau]$.

A CNF formula $F$ is \emph{satisfiable} if there is some $\tau\in
2^{\var(F)}$ with $F[\tau]=\emptyset$, otherwise $F$ is
\emph{unsatisfiable}. Two CNF formulas are \emph{equisatisfiable} if
either both are satisfiable, or both are unsatisfiable.  SAT is the
$\NP$\hy complete problem of deciding whether a given CNF formula is
satisfiable~\cite{Cook71,Levin73}.

\paragraph{Islands of Tractability and Backdoors}
Backdoors are defined \mywrt a fixed class $\CCC$ of CNF
formulas, the \emph{base class} (or \emph{target class}, or more more
figuratively, \emph{island of tractability}). From a base class we
require the following properties: (i)~$\CCC$ can be recognized in
polynomial time, (ii)~the satisfiability of formulas in $\CCC$ can be
decided in polynomial time, and (iii)~$\CCC$ is closed under isomorphisms
(\ie, if two formulas differ only in the names of their variables,
then either both or none belong to $\CCC$).

Several base classes considered in this survey also satisfy additional
properties. Consider a class $\CCC$ of CNF formulas.  $\CCC$~is
\emph{clause-induced} if it is closed under subsets, \ie, if $F\in
\CCC$ implies $F'\in \CCC$ for each $F'\subseteq F$. $\CCC$~is
\emph{clause-defined} if for each CNF formula $F$ we have $F\in \CCC$
if and only if $\set{C}\in \CCC$ for all clauses $C\in F$.  $\CCC$~is
closed under \emph{variable-disjoint union} if for any two CNF
formulas $F_1,F_2\in \CCC$ with $\var(F_1)\cap \var(F_2)=\emptyset$,
also $F_1\cup F_2\in \CCC$.  $\CCC$~is \emph{self-reducible} if for
any $F \in \CCC$ and any partial truth assignment $\tau$, also
$F[\tau]\in \CCC$.

%  One can
%identify a base class $\CCC$ with an polynomial-time algorithm that,
%given a CNF formula $F$, has three possible outputs: satisfiable,
%unsatisfiable, and don't know. Such an algorithms are called
%sub-solvers~\cite{GomesKautzSabharwalSelman08}.

A \emph{strong} \emph{$\CCC$-backdoor set} of a CNF formula $F$ is a set
$B$ of variables such that $F[\tau]\in \CCC$ for each $\tau \in 2^B$.  A
\emph{weak} \emph{$\CCC$-backdoor set} of $F$ is a set $B$ of variables
such that $F[\tau]$ is satisfiable and $F[\tau]\in \CCC$ holds for some
$\tau \in 2^B$.
A \emph{deletion $\CCC$-backdoor set} of $F$ is a set $B$ of variables such that
$F - B \in \CCC$, where $F - B = \set{C \setminus \set{x^0, x^1 : x\in B} : C \in F}$.

If we know a strong $\CCC$-backdoor set of $F$ of
size $k$, we can reduce the satisfiability of $F$ to the satisfiability
of $2^k$ formulas in $\CCC$. 
%If we know a weak $\CCC$-backdoor set of
%$F$ of size $k$, then we can determine that $F$ is satisfiable by
%running the sub-solver once. 
Thus SAT becomes fixed-parameter tractable in $k$.  If we know a
weak $\CCC$\hy backdoor set of $F$, then $F$ is clearly satisfiable, and
we can verify it by trying for each $\tau \in 2^X$ whether $F[\tau]$ is
in $\CCC$ and satisfiable.
If $\CCC$ is clause-induced, any deletion $\CCC$-backdoor set of $F$
is a strong $\CCC$-backdoor set of $F$. For several base classes, deletion
backdoor sets are of interest because they are easier to detect than strong backdoor sets.
The challenging problem is to find a strong, weak, or deletion
$\CCC$\hy backdoor set of size at most $k$ if it exists. For
each class $\CCC$ of CNF formulas we consider the following decision
problems.

\begin{quote}
  \textsc{Strong $\CCC$-Backdoor Set Detection}\\ \nopagebreak
  \emph{Instance}: A CNF formula $F$ and an integer $k\geq 0$.\\ \nopagebreak
  \emph{Parameter}: The integer $k$.\\ \nopagebreak
  \emph{Question}: Does $F$ have a strong $\CCC$-backdoor set of size
  at most $k$?
\end{quote}
The problems \textsc{Weak $\CCC$-Backdoor Set Detection} and \textsc{Deletion $\CCC$-Backdoor Set Detection} are defined
similarly.

In fact, for the backdoor approach we actually need the functional
variants of these problems, where if a backdoor set of size at most $k$
exists, such a set is computed. However, for all cases considered in
this survey, where backdoor detection is fixed-parameter tractable, the
respective algorithms also compute a backdoor set.

We also consider these problems for formulas with bounded clause lengths.
All such results are stated for 3CNF formulas, but hold, more generally,
for $r$CNF formulas, where $r\ge 3$ is a fixed integer.

\section{Base Classes}

In this section we define the base classes for the SAT problem that we
will consider in this survey.

\subsection{Schaefer's Base Classes}
In his seminal paper, Schaefer~\cite{Schaefer78} classified the
complexity of generalized satisfiability problems in terms of the
relations that are allowed to appear in constraints. For CNF
satisfiability, this yields the following five base classes\footnote{Affine Boolean formulas considered by Schaefer do
  not correspond naturally to a class of CNF formulas, hence we do not
  consider them here}.
\begin{enumerate}
\item \emph{Horn formulas:} CNF formulas where each clause contains at most one
  positive literal.
\item \emph{Anti-Horn formulas:} CNF formulas where each clause contains
  at most one negative literal.
\item \emph{2CNF formulas:} CNF formulas where each clause contains at most
   two literals.
\item \emph{0-valid formulas:} CNF formulas where each clause contains at least
  one negative literal.
\item \emph{1-valid formulas:} CNF formulas where each clause contains at least
  one positive literal.
\end{enumerate}
We denote the respective classes of CNF formulas by $\HORN$, $\HORN^-$,
$\TWOCNF$, $\ZEROV$, and $\ONEV$, and we write $\SCHAEFER=\set{\HORN,
\HORN^-, \TWOCNF, \ZEROV, \ONEV}$. We note that all these classes
are clause-defined, and by Schaefer's Theorem, these are the only
maximal clause-defined base classes. We also note that $\ZEROV$ and
$\ONEV$ are the only two base classes considered in this survey that
are not self-reducible.

\subsection{Base Classes Based on Subsolvers}

State-of-the-art SAT-solvers are based on variants of the so-called
Davis-Logemann-Loveland (DPLL) procedure
\cite{DavisPutnam60,DavisLogemannLoveland62} (see also~\cite{CookMitchell97}).  The DPLL
procedure searches systematically for a satisfying assignment,
applying first \emph{unit propagation} and \emph{pure literal
  elimination} as often as possible. Then, DPLL branches on the truth
value of a variable, and recurses. The algorithms stops if either
there are no clauses left (the original formula is satisfiable) or all
branches of the search lead to an empty clause (the original formula
is unsatisfiable).  Unit propagation takes as input a CNF formula $F$
that contains a ``unit clause'' $\set{x^\epsilon}$ and outputs
$F[x=\epsilon]$.  Pure literal elimination takes as input a CNF
formula $F$ that has a ``pure literal'' $x^\epsilon$, where $x\in \var(F)$
and $x^{1-\epsilon} \notin \bigcup_{C\in F} C$, and outputs $F[x=\epsilon]$.
In both cases $F$ and $F[x=\epsilon]$ are equisatisfiable. If we omit
the branching, we get an incomplete algorithm which decides
satisfiability for a subclass of CNF formulas. Whenever the algorithm
reaches the branching step, it halts and outputs ``give up''.  This
incomplete algorithm is an example of a ``subsolver'' as considered by
Williams \etal\cite{WilliamsGomesSelman03}.  The DPLL procedure
gives rise to three non-trivial subsolvers: $\UP+\PL$ (unit
propagation and pure literal elimination are available), $\UP$ (only
unit propagation is available), $\PL$ (only pure literal elimination
is available). We associate each subsolver with the class of CNF
formulas for which it determines the satisfiability (this is
well-defined, since unit propagation and pure literal elimination are
confluent operations). Since the subsolvers clearly run in polynomial
time, $\UP+\PL$, $\UP$, and $\PL$ form base classes. We write
$\SUBSOLVER=\set{\UP+\PL, \UP, \PL}$.

\subsection{Miscellaneous Base Classes}

\paragraph{Renamable Horn} Let $X$ be a set of variables and $F$ a CNF
formula. We let $r_X(F)$ denote the CNF formula obtained from $F$ by
replacing for every variable $x\in X$, all occurrences of $x^\epsilon$
in $F$ with $x^{1-\epsilon}$, for $\epsilon\in \set{0,1}$. We call
$r_X(F)$ a \emph{renaming} of~$F$. Clearly $F$ and $r_X(F)$ are
equisatisfiable.  A CNF formula is called \emph{renamable Horn} if it
has a renaming which is Horn, and we denote the class of renamable Horn
formulas as $\RHORN$.  It is easy to see that $\HORN$ is a strict subset
of $\RHORN$.  One can find in polynomial time a Horn renaming of a given
CNF formula, if it exists~\cite{Lewis78}. Hence $\RHORN$ is a further
base class. In contrast to $\HORN$, $\RHORN$ is not clause-defined.

\paragraph{Forests}
Many NP-hard problems can be solved in polynomial time for problem
instances that are in a certain sense acyclic. The satisfiability
problem is no exception. There are various ways of defining a CNF
formula to be acyclic.  Here we consider acyclicity based on (undirected) incidence graphs: the \emph{incidence graph}
of a CNF formula $F$ is the bipartite graph whose vertices are the
variables and the clauses of $F$; a variable $x$ and a clause $C$ are
joined by an edge \myiff $x\in \var(C)$.
% The \emph{directed
%  incidence graph} is defined similarly, where an edge between $x$ and
%$C$ is directed from $x$ to $C$ if $x\in C$, and it is directed from $C$
%to $X$ if $\ol{x}\in C$. 
Let $\FOREST$ denote the class of CNF formulas whose undirected
incidence graphs are forests.  It is well known that $\FOREST$ forms
islands of tractability: the satisfiability of CNF
formulas whose incidence graphs have bounded treewidth can be decided in
linear time~\cite{FischerMakowskyRavve06,SamerSzeider10}. $\FOREST$ is
the special case of formulas with treewidth at most $1$.

\paragraph{Clusters}
A CNF formula $F$ is called a \emph{hitting} if any two distinct
clauses clash. Two clauses $C,C'\in F$ \emph{clash} if
they contain a complimentary pair of literals, \ie, $C\cap \ol{C'}\neq
\emptyset$.  A CNF formula is called a
\emph{clustering formula} if it is a variable disjoint union of hitting
formulas.  We denote by $\CLU$ the class of clustering
formulas. Clustering formulas not only allow polynomial-time SAT
decision, one can even count the number of satisfying truth assignments
in polynomial time. This is due to the fact that each truth assignment
invalidates at most one clause of a hitting
formula~\cite{Iwama89,NishimuraRagdeSzeider07}.

%\paragraph{Bounded Deficiency} {Don't know if we should include also
%classes of \emph{bounded deficiency} $\DEF_r$ \cite{Szeider08c}. weak,
%strong and deletion backdoor sets are $\W[2]$\hy hard to find. Have
%not thought about 3CNF.  If we can't solve the 3CNF cases we can omit
%the class}

\begin{table}[tb]
\begin{threeparttable}
\newcommand{\scite}[1]{${}^\text{{\protect\cite{#1}}}$}

%\newcommand{\both}[1]{\multicolumn{2}{c}{#1}}
%\newcommand{\group}[2]{\multirow{#1}{25mm}{% \em #2}}
%\newcommand{\morespace}[1]{\hspace{3mm}#1\hspace{3mm}}

% \begin{tabular}{@{}l@{\qquad}cc@{\quad}cc@{\quad}cc@{}}
%   \toprule
%   & \both{\emph{Weak}}    & \both{\emph{Strong}} &
%   \both{\emph{Deletion}}    \\
%   \emph{Base Class}       & \morespace{CNF} & \morespace{3CNF} & \morespace{CNF} & \morespace{3CNF} & \morespace{CNF} & \morespace{3CNF}  \\
%   \midrule
%   $\CCC\in\SCHAEFER$
%   & W[2]-h\scite{NishimuraRagdeSzeider04-informal}  & FPT
%   & \both{FPT\scite{NishimuraRagdeSzeider04-informal}}   
%   & \both{FPT\scite{NishimuraRagdeSzeider04-informal}}       \\
%   % \hline
%   $\CCC\in\SUBSOLVER$          & \both{W[P]-c\scite{Szeider05d}}
%   & \both{W[P]-c\scite{Szeider05d}} 
%   & \both{n/a}       \\ 
%   $\FOREST$ &  W[2]-h  & FPT\scite{GaspersSzeider11a}  & ?\tnote{1} & ? & \both{FPT}\\
%   $\RHORN$ & \both{W[2]-h} & W[2]-h & ? &\both{FPT\scite{RazgonOSullivan08}}\\
%   $\CLU$  &  W[2]-h    & FPT  
%   & W[2]-h\scite{NishimuraRagdeSzeider07} & FPT 
%   &\both{FPT\scite{NishimuraRagdeSzeider07}}      \\
%   \bottomrule
% \end{tabular}
% \begin{tablenotes}\footnotesize 
% \item[1] Theorem \ref{the:strong-tree} gives an fpt approximation for this problem.
% \end{tablenotes}

%\begin{tabular}{@{}l@{\hspace{13mm}}l@{\hspace{13mm}}l@{\hspace{13mm}}l@{}}
\begin{tabular*}{\textwidth}{@{\extracolsep{\fill}} *{4}{l} }
  \toprule
  \emph{Base Class}      & \textsc{Weak} & \textsc{Strong} &
  \textsc{Deletion}  \\
  \midrule
  $\CCC\in\SCHAEFER$
  & W[2]-h \scite{NishimuraRagdeSzeider04-informal}  (FPT)
  & FPT \scite{NishimuraRagdeSzeider04-informal}   
  & FPT \scite{NishimuraRagdeSzeider04-informal}       \\
  $\CCC\in\SUBSOLVER$    
  & W[P]-c \scite{Szeider05d}
  & W[P]-c \scite{Szeider05d} 
  & n/a       \\ 
  $\FOREST$ 
  & W[2]-h \scite{GaspersSzeider11a}  (FPT \scite{GaspersSzeider11a})
  & ?${}^\dagger$\ (?) 
  & FPT\\
  $\RHORN$ 
  & W[2]-h 
  & W[2]-h (?) 
  & FPT \scite{RazgonOSullivan08}\\
  $\CLU$  
  & W[2]-h \scite{NishimuraRagdeSzeider07} (FPT)  
  & W[2]-h \scite{NishimuraRagdeSzeider07} (FPT \scite{NishimuraRagdeSzeider07}) 
  & FPT \scite{NishimuraRagdeSzeider07}      \\
  \bottomrule
\end{tabular*}
\begin{tablenotes}\footnotesize 
\item[( )] It is indicated in parentheses if the complexity of the
  problem for 3CNF formulas is different from general CNF or unknown.
\item[?] It is open whether the problem is fixed-parameter tractable.
\item[$\dagger$] Theorem \ref{the:strong-tree} gives an fpt
  approximation for this problem.
\item [n/a] Deletion backdoor sets are undefined for base classes that
  are not clause-induced.
\end{tablenotes}

 \medskip
 \caption{The parameterized complexity of \textsc{Weak,
     Strong}, and \textsc{Deletion $\CCC$\hy  Backdoor Set Detection} for
 various base classes $\CCC$.}  
  \label{tab:SAT}
\end{threeparttable}
\end{table}

\section{Detecting Weak Backdoor Sets}
It turns out that for all base classes $\CCC$ considered in
this survey, \textsc{Weak $\CCC$-Backdoor Set Detection} is
$\W[2]$-hard. In several cases, restricting the input formula to 3CNF
helps, and makes \textsc{Weak $\CCC$-Backdoor Set Detection}
fixed-parameter tractable.

In the proof of the following proposition we use a general approach that
entails previously published proofs (such as in
\cite{GaspersSzeider11a,NishimuraRagdeSzeider04-informal,NishimuraRagdeSzeider07}) as
special cases.
\begin{proposition}\label{pro:weak-hard}\sloppypar
  \textsc{Weak $\CCC$-Backdoor Set Detection} is $\W[2]$-hard for all
  base classes $\CCC\in \SCHAEFER \cup \set{\RHORN, \FOREST, \CLU}$.
\end{proposition}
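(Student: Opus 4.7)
I would reduce from the W[2]-complete \class{Hitting Set} problem, parameterized by solution size $k$: given a set system $(U,\SSS)$, decide whether some $H\subseteq U$ of size at most $k$ meets every $S\in\SSS$. For each base class $\CCC\in\SCHAEFER\cup\set{\RHORN,\FOREST,\CLU}$ I would construct a CNF formula $F=F_\CCC(U,\SSS)$ on variables $\set{x_u:u\in U}$ (plus, where needed, class-specific auxiliaries) such that $F$ has a weak $\CCC$-backdoor set of size at most $k$ if and only if $(U,\SSS)$ has a hitting set of size at most $k$. When the class requires it, I first pad $\SSS$ by a standard argument so that all its sets are large enough; this preserves W[2]-hardness.

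\textbf{The unified gadget.} To each $S\in\SSS$ I attach a small ``non-membership witness'' gadget $G_S$ built on the variables $\set{x_u:u\in S}$ (plus possibly fresh auxiliaries) satisfying (i) $G_S\notin\CCC$, and (ii) restricting some $x_u$ with $u\in S$ to an appropriate value $\epsilon_S\in\set{0,1}$ rewrites the whole formula into a member of $\CCC$ and keeps it satisfiable. For the Schaefer classes the gadget is a single clause: the positive clause $\set{x_u:u\in S}$ for $\HORN$ and $\ZEROV$ (with $\Card{S}\ge 2$), its negation $\set{\neg x_u:u\in S}$ for $\HORN^-$ and $\ONEV$, and $\set{x_u:u\in S}$ with $\Card{S}\ge 3$ for $\TWOCNF$. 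For the non-clause-defined classes I use a canonical forbidden substructure over $\set{x_u:u\in S}$: a cycle through the incidence graph for $\FOREST$, a minimal non-renamable-Horn subformula for $\RHORN$, and a non-clustering arrangement of clashing clauses for $\CLU$.

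\textbf{Equivalence.} The forward direction is easy: given a hitting set $H$, the set $B=\set{x_u:u\in H}$ together with the witness $\tau$ that sets each $x_u\in B$ to the repairing value $\epsilon_S$ for some $S\ni u$ forms a weak $\CCC$-backdoor set, because every gadget $G_S$ is dissolved by the variable of $B$ lying in $S$, and the remainder of $F$ is designed to lie in $\CCC$ and be satisfiable under this $\tau$. Conversely, if $B$ is a weak $\CCC$-backdoor set of size at most $k$ with witness $\tau$, then for every $S\in\SSS$ the witness must touch some $x_u$ with $u\in S$ in the right way; otherwise $G_S\subseteq F[\tau]$ still witnesses $F[\tau]\notin\CCC$ by property~(i), contradicting $F[\tau]\in\CCC$. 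Projecting $B$ onto the $x_u$-variables then yields a hitting set of size at most $k$.

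\textbf{Main obstacle.} The technical difficulty lies in designing gadgets for the three non-clause-defined classes $\FOREST$, $\RHORN$, and $\CLU$. There, $\CCC$-membership is a global property of the formula rather than a condition per clause, so $G_S$ must remain ``broken'' under arbitrary assignments to variables outside $S$, preventing non-$S$ backdoor variables from silently rescuing it; simultaneously, the auxiliaries inside $G_S$ must be pinned, typically by additional forcing clauses, so that including an auxiliary in the backdoor cannot substitute for hitting $S$. Once the gadgets are in place, verifying $F[\tau]\in\CCC$ and that $F[\tau]$ is satisfiable in the forward direction is routine.
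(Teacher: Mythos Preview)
Your high-level plan---reduce from \textsc{Hitting Set} and attach to each set $S$ a gadget $G_S$ that is ``broken'' for $\CCC$ but becomes repaired once some $x_u$ with $u\in S$ is assigned---is exactly the paper's strategy, and for the clause-defined Schaefer classes your single-clause gadgets (with no auxiliaries) already work: if $B$ misses $S$, the offending clause survives untouched in $F[\tau]$.

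The genuine gap is in the backward direction for $\RHORN$, $\FOREST$, and $\CLU$. You yourself identify the problem: these gadgets need auxiliary (internal) variables, and nothing prevents a weak backdoor set $B$ of size $k$ from containing only auxiliaries and still pushing every $G_S$ into $\CCC$. Your proposed remedy, ``pinning the auxiliaries by forcing clauses,'' does not address this: unit clauses constrain which assignments \emph{satisfy} the formula, but a weak-backdoor witness $\tau$ may assign an auxiliary $z$ either way and thereby delete $z$ from the incidence graph (breaking a cycle for $\FOREST$) or remove the offending clause pattern (for $\RHORN$, $\CLU$), regardless of any unit clause on $z$. You have stated the desired invariant (``$G_S$ must remain broken under arbitrary assignments to variables outside $S$'') without a construction that achieves it. The paper's fix is the standard $k{+}1$-copies trick: replace each $G_S$ by $k{+}1$ variable-disjoint copies that share only the external variables $\{x_u:u\in S\}$, each copy having its own fresh internal variables. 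A size-$k$ backdoor set can touch the internals of at most $k$ copies, so by pigeonhole one copy survives intact unless $B$ contains some external $x_u$ with $u\in S$; this forces $B$ to be a hitting set.

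There is a second, smaller issue in your forward direction for the global classes, especially $\FOREST$. Because the $x_u$'s are shared across gadgets, it is not enough that each $G_S[\tau]\in\CCC$; cycles (or overlap/clash obstructions) can run \emph{between} gadgets through the shared $x_u$'s. The paper avoids this by designing each gadget so that every external variable occurs in exactly the same clauses of the gadget; after $\tau$ sets all backdoor variables to $1$, the residual pieces are pairwise variable-disjoint, and closure of $\CCC$ under variable-disjoint union finishes the argument. Your sketch does not enforce this separation.
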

\begin{proof}
  We show $\W[2]$-hardness for $\CCC\in \{\TWOCNF$, $\HORN$, $\ZEROV$,
  $\RHORN$, $\FOREST$, $\CLU\}$. The hardness proofs for the remaining two classes
  $\ONEV$ and $\HORN^-$ are symmetric to the proofs for $\ZEROV$ and
  $\HORN$, respectively.

  Let $G$ be a CNF formula with a set $X\subseteq \var(G)$ of its
  variables marked as \emph{external}, all other variables of $G$ are
  called \emph{internal}.  We call $G$ an \emph{or-gadget} for a
  base class~$\CCC$ if $G$ has the following properties:
  \begin{enumerate}
  \item\label{weak-or:B} $G\notin \CCC$.
  \item\label{weak-or:A} $G\in \ONEV$.
  \item\label{weak-or:C} $G[x=1]\in \CCC$ holds for all $x\in X$. 
  \item\label{weak-or:D} For each clause $C\in G$
    either $X\subseteq C$ or $\var(C)\cap X=\emptyset$.
  \item\label{weak-or:E} $\var(G)\setminus X\neq \emptyset$.
  \item\label{weak-or:F} $G$ can be constructed in time
    polynomial in $\Card{X}$.
  \end{enumerate}
First, we show the following meta-result, and then we define or-gadgets
for the different base-classes.

  \emph{Claim 1:} If $\CCC$ is clause-induced,  closed under
  variable-disjoint union, and has an or-gadget for any number $\geq
  1$ of external variables, then \textsc{Weak $\CCC$-Backdoor Set
    Detection} is $\W[2]$-hard.
  
  We prove the claim by giving a parameterized reduction from the
  $\W[2]$\hy complete problem \textsc{Hitting Set
    (HS)}~\cite{DowneyFellows99}. Let $(\SSS,k)$,
  $\SSS=\set{S_1,\dots,S_m}$, be an instance of HS.
  %, w.l.o.g., $S_i\neq  \emptyset$ for all $1\leq i \leq m$.  
  Let $I=\set{1,\dots,m} \times \set{1,\dots,k+1}$.  For each $S_i$ we
  construct $k+1$ or-gadgets $G_i^1,\dots,G_i^{k+1}$ whose external
  variables are exactly the elements of $S_i$, and whose internal
  variables do not appear in any of the other gadgets $G_{i'}^{j'}$ for
  $(i',j')\in I\setminus \{(i,j)\}$.
  Let $F=\bigcup_{(i,j)\in I} G_i^j$.  From Property~\ref{weak-or:F} it
  follows that $F$ can be constructed from $\SSS$ in polynomial time.
  We show that $\SSS$ has a hitting set of size $k$ \myiff $F$
  has a weak $\CCC$-backdoor set of size~$k$.
  
  Assume $B \subseteq \bigcup_{i=1}^m S_i$ is a hitting set of $\SSS$ of
  size $k$. Let $\tau \in 2^B$ the truth assignment that sets all
  variables from $B$ to~1.  By Properties~\ref{weak-or:A} and
  \ref{weak-or:C}, $G_i^j[\tau]$ is satisfiable and belongs to $\CCC$
  for each $(i,j)\in I$.  By Property~\ref{weak-or:D},
  $\var(G_i^j[\tau]) \cap \var(G_{i'}^{j'}[\tau])=\emptyset$ for any two
  distinct pairs $(i,j),(i',j')\in I$. Consequently $F[\tau]$ is
  satisfiable, and since $\CCC$ is closed under variable-disjoint union,
  $F[\tau]$ belongs to~$\CCC$. Thus $B$ is a weak $\CCC$\hy backdoor set
  of $F$ of size~$k$.

  Conversely, assume that $B\subseteq \var(F)$ is a weak $\CCC$\hy
  backdoor set of $F$ of size $k$. Hence, there exists a truth
  assignment $\tau\in 2^B$ such that $F[\tau]$ is satisfiable and
  belongs to~$\CCC$.  Clearly for each $(i,j)\in I$, $G_i^j[\tau]$ is
  satisfiable (since $G_i^j[\tau]\subseteq F$), and $G_i^j[\tau]\in
  \CCC$ (since $\CCC$ is clause-induced).  However, since $G_i^j \notin
  \CCC$ by Property~\ref{weak-or:B}, $B\cap \var(G_i^j) \neq \emptyset$
  for each $(i,j)\in I$.  Let $1\leq i \leq m$.  By construction, $F$
  contains $k+1$ copies $G_i^1\dots,G_i^{k+1}$ of the same gadget. From
  Property~\ref{weak-or:E} it follows that all the $k+1$ copies are
  different.  Since $\Card{B}\leq k$, there must be some $x_i \in B$
  such that there are $1\leq j' < j'' \leq k+1$ with $x_i \in
  \var(G_i^{j'}) \cap \var(G_i^{j''})$. It follows that $x_i$ is an
  external variable of $G_i^{j'}$, hence $x_i \in B\cap
  S_i$. Consequently, $B$ is a hitting set of $\SSS$.

  Hence we have indeed a parameterized reduction from HS to \textsc{Weak
    $\CCC$-Backdoor Set Detection}, and Claim~1 is shown true.  We
  define for each class $\CCC\in \{\TWOCNF$, $\HORN$, $\ZEROV$,
  $\RHORN$, $\FOREST$, $\CLU\}$  an  or-gadget $F(\CCC)$ where
  $X=\{x_1,\dots,x_s\}$ is the set of external variables; internal
  variables are denoted~$z_i$.
  \begin{itemize}
  \item   $G(\TWOCNF)=\{X\cup \{z_1,z_2\}\}$.
  \item   $G(\HORN)=G(\ZEROV)=\{X\cup \{z_1\}\}$.
  \item   $G(\RHORN)=\{X\cup \{\neg z_1, \neg z_2\}, \{z_1,\neg z_2\}, \{\neg z_1,z_2\},
    \{z_1,z_2\} \}$.
%   \item   $G(\RHORN)=\{X\cup \{z_1,z_2\}, \{z_1,\neg z_2\}, \{\neg z_1,z_2\},
%     \{\neg z_1,\neg z_2\} \}$.
  \item   $G(\FOREST)=\{X\cup \{\neg z_1, \neg z_2\}, \{ z_1, z_2\}\}$.
  \item   $G(\CLU)=\{X\cup \{z_1\}, \{z_1\}\}$. 
%  \item   $G(\CLU)=\{X\cup \{z_1\}, \{z_1,z_2\}\}$. 
  \end{itemize}
  Since the considered classes $\CCC$ are clearly clause-induced and
  closed under variable-disjoint union, the proposition now follows from
  Claim~1.
\end{proof}

For base classes based on subsolvers, weak backdoor set detection is even
$\W[P]$\hy hard. This is not surprising, since the subsolvers allow a
propagation through the formula which is similar to the propagation in
problems like \textsc{Minimum Axiom Set} or \textsc{Degree 3 Subgraph
  Annihilator}~\cite{DowneyFellows99}. The proof of the following
theorem is based on a reduction from the $\W[P]$\hy complete problem
\textsc{Cyclic Monotone Circuit Activation}.
\begin{theorem}[\cite{Szeider05d}]\label{the:subsolver-weak}
  \textsc{Weak $\CCC$-Backdoor Set Detection} is $\W[P]$-complete for
  all base classes  $\CCC\in \SUBSOLVER$. This even holds if the
  input formula is in 3CNF.
\end{theorem}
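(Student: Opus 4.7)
The plan is to establish $\W[P]$-completeness by proving membership and hardness separately. Membership is routine: invoking the machine-based characterization of $\W[P]$, a nondeterministic algorithm guesses a subset $B\subseteq \var(F)$ of size $k$ together with a truth assignment $\tau\in 2^B$ using $O(k\log n)$ nondeterministic bits, and then runs the polynomial-time subsolver on $F[\tau]$, accepting iff it reports ``satisfiable''. This gives membership uniformly for all three base classes in $\SUBSOLVER$. For hardness, I would reduce from \textsc{Cyclic Monotone Circuit Activation}: given a monotone Boolean circuit $C$ with AND/OR gates (possibly cyclic), an output gate $g_{\mtext{out}}$, and $k\geq 0$, decide whether some set of at most $k$ gates, when initially activated, forces $g_{\mtext{out}}$ to be activated under least-fixed-point propagation.

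The construction introduces one variable $x_i$ per gate $g_i$ and simulates the circuit semantics with short Horn clauses whose unit-propagation behaviour mirrors activation: an AND gate $g_i = g_{j_1}\wedge g_{j_2}$ contributes $\{\neg x_{j_1},\neg x_{j_2},x_i\}$, and an OR gate $g_i=g_{j_1}\vee g_{j_2}$ contributes $\{\neg x_{j_1},x_i\}$ and $\{\neg x_{j_2},x_i\}$. All clauses have at most three literals, so $F$ is in 3CNF. Setting $k$ backdoor variables to true triggers unit propagation along exactly the activation paths of the circuit. The key design step is an \emph{output gadget} attached to $x_{\mtext{out}}$: once $x_{\mtext{out}}$ is forced to true by UP, the gadget ignites a cascade of further unit propagations that empties every residual clause, so that $F[\tau]$ is decided satisfiable; conversely, if $x_{\mtext{out}}$ is never derived, the cascade never starts, UP runs out of unit clauses on a non-empty formula, and hence gives up, so $F[\tau]\notin \UP$. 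For $\UP+\PL$ and $\PL$, I would redesign the gadget with polarity-balancing auxiliary literals so that no unwanted pure literals appear initially and so that the cascade can also be driven through pure-literal elimination.

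The main obstacle is the backward direction of the equivalence: from any weak $\CCC$-backdoor $B$ with witness $\tau$, one must recover a circuit activation of size $|B|$. Care is required to rule out ``degenerate'' backdoors that exploit auxiliary variables or the output gadget to shortcut the intended correspondence, and to argue that $\tau$ must set the backdoor variables to true (rather than false) in order to trigger the intended cascade. A robust safeguard is the duplication technique used in Proposition~\ref{pro:weak-hard}: attach $k+1$ copies of the output gadget with disjoint auxiliary variables, so that any size-$k$ backdoor is forced by a counting argument to sit inside the shared gate-variable pool and to correspond to a genuine activation set. With these ingredients the equivalence is parameter-preserving, yielding $\W[P]$-hardness in 3CNF for each subsolver in $\SUBSOLVER$.
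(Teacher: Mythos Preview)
Your proposal follows exactly the approach the paper indicates: $\W[P]$-membership via the standard guess-and-verify characterization, and $\W[P]$-hardness via a parameterized reduction from \textsc{Cyclic Monotone Circuit Activation}. Since this survey does not reproduce the proof but only cites \cite{Szeider05d} and names the source problem, there is nothing finer to compare against at the level of the present paper; your outline is consistent with what is stated here.

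One remark on a soft spot in your sketch: for $\CCC=\PL$ the issue is not merely ``polarity-balancing auxiliary literals'' bolted onto the Horn encoding. In your clause set, every gate variable $x_i$ occurs both positively (in its defining clause) and negatively (in the clauses of the gates it feeds), so pure-literal elimination alone cannot propagate anything along the intended implication chains; the simulation of circuit activation by $\PL$ requires a genuinely different encoding in which activating a gate \emph{eliminates occurrences} of one polarity rather than creating unit clauses. This is not a gadget tweak at the output but a redesign of the per-gate clauses themselves. The original reference handles the three subsolvers with separate (though related) constructions, and you should expect to do the same rather than treat $\UP$ as the template and patch afterwards.
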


In summary, we conclude that \textsc{Weak $\CCC$-Backdoor Set Detection}
is at least $\W[2]$\hy hard for all considered base classes.  If we restrict
our scope to 3CNF formulas, we obtain mixed results.

\begin{proposition}\label{prop:fpt-clause-defined-3}
  For every clause-defined class $\CCC$, \textsc{Weak $\CCC$-Backdoor
    Set Detection} is fixed-parameter tractable for input formulas in 3CNF.
\end{proposition}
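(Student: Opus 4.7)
My plan is to use a straightforward bounded search tree that branches on the variables of ``bad'' clauses. Since $\CCC$ is clause-defined and polynomially recognizable, in polynomial time I can either certify that $F\in \CCC$ or exhibit a clause $C\in F$ with $\set{C}\notin \CCC$; call such a $C$ \emph{bad}.

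The key observation driving the branching is the following hitting property. Let $B$ be any weak $\CCC$-backdoor set of $F$ with witnessing assignment $\tau\in 2^B$. Then for every bad clause $C\in F$, $B\cap \var(C)\neq \emptyset$: otherwise $C$ is untouched by $\tau$, so $C\in F[\tau]$, and clause-definedness together with $\set{C}\notin \CCC$ forces $F[\tau]\notin \CCC$, contradicting the backdoor property.

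This immediately suggests the recursive algorithm $\textsc{Solve}(F,k)$: if $F\in \CCC$, use the polynomial-time satisfiability test for $\CCC$ and return YES iff $F$ is satisfiable (the empty backdoor then witnesses the answer); otherwise if $k=0$ return NO; otherwise pick any bad clause $C$, which, since $F$ is a 3CNF formula, satisfies $\Card{\var(C)}\leq 3$, and branch over the at most $2\cdot 3=6$ choices $(x,\epsilon)\in \var(C)\times \set{0,1}$, recursing on $\textsc{Solve}(F[x=\epsilon],k-1)$. The ``only if'' direction of correctness is precisely the hitting property: given a weak $\CCC$-backdoor $B$ of size at most $k$ with assignment $\tau$, pick any $x\in B\cap \var(C)$ and set $\epsilon=\tau(x)$; then $B\setminus\set{x}$ is a weak $\CCC$-backdoor of $F[x=\epsilon]$ of size at most $k-1$, witnessed by $\tau$ restricted to $B\setminus\set{x}$, since $(F[x=\epsilon])[\tau|_{B\setminus\set{x}}]=F[\tau]$.

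The search tree has depth at most $k$ and branching factor at most $6$, yielding $O(6^k)$ leaves with polynomial work per node (recognition of $\CCC$, satisfiability testing in $\CCC$, and computing the restriction), for a total running time of $O(6^k\cdot p(\Card{F}))$ for some polynomial $p$; this establishes fixed-parameter tractability in $k$. I do not anticipate any real obstacle; the only point to verify explicitly is that the 3CNF property is preserved under the restriction $F\mapsto F[x=\epsilon]$ so that the branching factor stays bounded throughout the recursion.
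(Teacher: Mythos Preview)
Your proposal is correct and follows essentially the same bounded search tree argument as the paper: pick a clause $C$ with $\{C\}\notin\CCC$ (which exists since $\CCC$ is clause-defined and $F\notin\CCC$), observe that any weak $\CCC$-backdoor set must hit $\var(C)$, and branch on the at most $6$ pairs $(x,\epsilon)\in\var(C)\times\{0,1\}$ to depth~$k$. Your write-up is in fact slightly more explicit than the paper's sketch in spelling out the hitting property and the preservation of 3CNF under restriction, but the underlying method is identical.
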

\begin{proof}
  The result follows by a standard bounded search tree argument,
  sketched as follows.  Assume we are given a CNF formula $F\notin \CCC$
  and an integer $k$. We want to decide whether $F$ has a weak
  $\CCC$\hy backdoor set of size $\leq k$. Since $\CCC$ is
  clause-defined, $F$ contains a clause $C$ such that $\{C\}\notin
  \CCC$. Hence some variable of $\var(C)$ must belong to any weak
  $\CCC$\hy backdoor set of $F$. There are at most $3$ such variables,
  each of which can be set to true or to false. Hence we branch in at
  most 6 cases. By iterating this case distinction we build a search
  tree $T$, where each node $t$ of $T$ corresponds to a partial truth
  assignment $\tau_t$. We can stop building the tree at nodes of depth
  $k$ and at nodes $t$ where $F[\tau_t]\in \CCC$. It is now easy to see
  that $F$ has a weak $\CCC$\hy backdoor set of size at most $k$ if and
  only if $T$ has a leaf $t$ such that $F[\tau_t]\in \CCC$ and
  $F[\tau_t]$ is satisfiable. For each leaf we can check in polynomial
  time whether these properties hold.
\end{proof}

In particular, \textsc{Weak $\CCC$-Backdoor Set Detection} is
fixed-parameter tractable for $\CCC \in \SCHAEFER$ if the input
formula is in 3CNF.

The proof of Proposition~\ref{prop:fpt-clause-defined-3} can be extended
to the class $\CLU$ of clustering formulas.  Nishimura et
al.~\cite{NishimuraRagdeSzeider07} have shown that a CNF formula is a
clustering formula \myiff it does not contain (i) two clauses
$C_1,C_2$ that overlap ($C_1\cap C_2 \neq \emptyset$) but do not clash
($C_1\cap \ol{C_2}=\emptyset$), or (ii)~three clauses $D_1,D_2,D_3$
where $D_1$ and $D_2$ clash, $D_2$ and $D_3$ clash, but $D_1$ and $D_3$
do not clash.  $\{C_1,C_2\}$ is called an overlap obstruction,
$\{D_1,D_2,D_2\}$ is called a clash obstruction.  Each weak $\CLU$\hy
backdoor set of a CNF formula $F$ must contain at least one variable
from each overlap and each clash obstruction.  However, if $F$ is a 3CNF
formula, the number of variables of an overlap obstruction is at most 5,
and the number of variables of a clash obstruction is at most 7. Hence
we can find a weak $\CLU$\hy backdoor set of size at most $k$ with a
bounded search tree, which gives the following result.

\begin{proposition}\label{pro:clu}
  \textsc{Weak $\CLU$-Backdoor Set Detection} is fixed-parameter
  trac\-ta\-ble for 3CNF formulas.
\end{proposition}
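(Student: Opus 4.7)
The plan is to carry out a bounded search tree, in the spirit of Proposition~\ref{prop:fpt-clause-defined-3} but with obstructions playing the role of forbidden clauses. Given $F$ and $k$, first test in polynomial time whether $F\in\CLU$ and $F$ is satisfiable (both are polynomial-time checks for clustering formulas); if yes, return the empty set as a witness. Otherwise, invoke the Nishimura--Ragde--Szeider characterization recalled just before the proposition to locate in $F$ either an overlap obstruction $\set{C_1,C_2}$ or a clash obstruction $\set{D_1,D_2,D_3}$. Because $F$ is in 3CNF, the variable set $Y$ of such an obstruction has size at most $5$ (overlap) or $7$ (clash).

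The key observation is that every weak $\CLU$-backdoor set $B$ of $F$ must meet $Y$: if $B\cap Y=\emptyset$, then no variable occurring in the obstruction clauses is touched by any $\tau\in 2^B$, so these clauses appear verbatim in $F[\tau]$ and form the same obstruction there, forcing $F[\tau]\notin\CLU$. We may therefore branch on the at most $14$ choices of $(x,\epsilon)\in Y\times \set{0,1}$, recursing on the pair $(F[x=\epsilon],k-1)$ in each branch.

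The recursion stops at depth $k$, or earlier at any node $t$ where the current formula $F[\tau_t]$ already lies in $\CLU$; at such leaves we check satisfiability in polynomial time and, if it holds, output the variables fixed along the root-to-$t$ path. Correctness is immediate: if $B$ witnessed by $\tau$ is a weak $\CLU$-backdoor set of size $\le k$, then $B$ contains some $x\in Y$, and the branch assigning $x=\tau(x)$ leaves $B\setminus\set{x}$ as a weak $\CLU$-backdoor set of size $\le k-1$ for $F[x=\tau(x)]$. This yields a total running time of $O(14^k\cdot \mtext{poly}(\Card{F}))$. The only content beyond Proposition~\ref{prop:fpt-clause-defined-3} is the obstruction-intersection argument and the constant-size bound on $\Card{Y}$ coming from the 3CNF assumption; without the latter bound one cannot cap the branching factor, which is the sole place the 3CNF hypothesis is genuinely used.
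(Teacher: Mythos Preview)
Your proposal is correct and follows essentially the same approach as the paper: locate an overlap or clash obstruction via the Nishimura--Ragde--Szeider characterization, observe that in 3CNF its variable set has size at most $5$ resp.\ $7$, argue that any weak $\CLU$-backdoor set must hit this set, and run a bounded search tree. Your write-up is in fact slightly more explicit than the paper's sketch (you spell out why disjointness of $B$ from $Y$ leaves the obstruction intact in $F[\tau]$), but the method is identical.
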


\begin{proposition}
\textsc{Weak $\RHORN$-Backdoor Set
    Detection} is $\W[2]$\hy hard, even for 3CNF formulas.
\end{proposition}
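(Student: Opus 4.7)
My plan is to follow the reduction from \textsc{Hitting Set} used for $\RHORN$ in the proof of Proposition~\ref{pro:weak-hard}, modifying only the or-gadget so that every clause has length at most $3$. The original gadget has a clause $X \cup \{\neg z_1, \neg z_2\}$ of size $|X|+2$; I will replace it by a chain of 3-clauses $\{x_1, \neg z_1, y_1\}$, $\{x_i, \neg y_{i-1}, y_i\}$ for $2 \le i \le s-1$, and $\{x_s, \neg y_{s-1}, \neg z_2\}$, where $y_1, \dots, y_{s-1}$ are fresh internal variables local to each copy of the gadget. The core clauses $\{z_1, \neg z_2\}$, $\{\neg z_1, z_2\}$, $\{z_1, z_2\}$ on fresh $z_1, z_2$ are kept from the original gadget. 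All clauses are of size at most $3$ and 1-valid by inspection.

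To verify that this forms an or-gadget I use the standard 2SAT characterization of renamable Horn: $F \in \RHORN$ iff the 2SAT instance $\Pi(F)$, containing for every pair of literals in every clause of $F$ a 2-clause saying ``not both positive after renaming'', is satisfiable. The core clauses force $\rho_{z_1}=\rho_{z_2}=1$ in $\Pi(G^3(\RHORN))$; the chain then propagates to force $\rho_{y_1}=\dots=\rho_{y_{s-1}}=1$; and the closing clause contributes the 2-clause $\neg \rho_{y_{s-1}} \lor \neg \rho_{z_2}$, which is incompatible with the forced values. Hence $G^3(\RHORN) \notin \RHORN$. When some $x_j$ is set to $1$, the unique chain clause with $x_j$ as a positive literal is removed, the chain is broken, and an explicit satisfying assignment of the reduced $\Pi$ (renaming $z_1, z_2$, all external variables other than $x_j$, and an appropriate subset of the $y_i$) witnesses $G^3(\RHORN)[x_j{=}1] \in \RHORN$.

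The only property of the abstract or-gadget framework that does not hold literally is Property~\ref{weak-or:D}, since each chain clause contains exactly one external variable; but this is not needed for Claim~1. For the forward direction, choosing $\rho_x = 1$ uniformly for every external variable outside $B$ makes the per-gadget renamings globally consistent, so the union $F[\tau]$ is in $\RHORN$; satisfiability of $F[\tau]$ is immediate by propagating $z_1=z_2=1$ through the chain. The backward direction is unchanged from Claim~1: since $G^3(\RHORN) \notin \RHORN$ and $\RHORN$ is clause-induced, $B$ must hit every one of the $k+1$ gadget copies, and pigeonhole over their disjoint sets of internal variables forces $B$ to hit $S_i$ externally. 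The main technical challenge will be calibrating the chain so that the 2SAT contradiction genuinely requires every chain clause, so that removing any single one both breaks the contradiction and leaves the per-gadget renamings mutually compatible across overlapping sets $S_i$; the chain above is designed to achieve this through a single unavoidable propagation from $z_1$ through $y_1, \dots, y_{s-1}$ to $z_2$.
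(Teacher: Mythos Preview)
Your proposal is correct and follows essentially the same approach as the paper: both replace the long clause of the $\RHORN$ or-gadget by a chain of $3$-clauses threaded through fresh internal variables, keeping a three-clause ``core'' on two internal variables that is not renamable Horn, so that removing any one chain clause breaks the unique implication path in the associated 2SAT instance and restores membership in $\RHORN$.

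The one noteworthy difference is the polarity of the external variables in the chain. The paper uses clauses of the form $\{z_i,\neg x_i,\neg z_{i+1}\}$ (with core on $z_1,z_{s+1}$), so the external variables occur \emph{negatively}; setting a hitting element to $0$ kills its clause, and the witnessing renaming $r_{\{z_{i+1},\dots,z_{s+1}\}}$ touches only internal variables. This makes the global-consistency step trivial: different gadgets never impose any renaming constraint on shared external variables. In your construction the external variables occur \emph{positively}, so setting a hitting element to $1$ kills its clause, but the remaining chain clauses force $\rho_x=1$ for every surviving external variable. You correctly observe that this is still globally consistent (rename all externals outside $B$), so the argument goes through; the paper's choice just sidesteps the need for that observation. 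Either way, the backward direction is identical to Claim~1 via pigeonhole over the $k{+}1$ internally-disjoint copies. One small point to tidy: your chain description degenerates for $s\le 2$ (there is no $y_0$), but this is harmless since singleton sets in a \textsc{Hitting Set} instance can be preprocessed away.
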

% \begin{proof}
%   Similarly to the proof of Proposition~\ref{pro:weak-hard} we reduce
%   from HS.  As gadgets we use formulas of the form $G=\{\{\neg x_1,\neg
%   z_1,z_2\}$, $\{\neg x_2,\neg z_2, z_3\}, \dots, \{\neg x_s,\neg z_s,
%   z_{s+1}\}$, $\{ z_{1}, \neg z_{s+1}\}$, $\{\neg z_{1}, \neg
%   z_{s+1}\}$, $\{z_{1}, z_{s+1}\}\}$, where $x_1,\dots,x_s$ are external
%   variables and $z_1,\dots,z_{s+1}$ are internal variables. $G$ can be
%   considered as being obtained form the complete formula $\SB \{
%   z_{1}^\epsilon, z_{s+1}^\delta \} \SM \epsilon,\delta \in \{0,1\} \SE$
%   by ``subdividing'' the clause $\{\neg z_1,z_s \}$. $G \notin \RHORN$
%   but $G[x_i=0]\in \RHORN$. In fact, $r_X(G[x_i=0])\in \HORN$ for
%   $X=\{z_{i+1},\dots,z_s\}$, hence no external variable needs to be
%   renamed. Moreover, we can satisfy $G[x_i=0]$ by setting all external
%   variables and $z_1$ to~0, and by setting $z_s$ to~1.
% 
%   Let $(\SSS,k)$, $\SSS=\{S_1,\dots,S_m\}$, be an instance of HS.  For
%   each $S_i$ we construct $k+1$ gadgets $G_i^1,\dots,G_i^{k+1}$, each
%   having $S_i$ as the set of its external variables, and the internal
%   variables are new variables only used inside a gadget. We let $F$ to
%   be the union of all such gadgets $G_i^j$ for $1\leq i \leq m$ and
%   $1\leq j \leq k+1$.
% 
%   Similar to the proof of Proposition~\ref{pro:weak-hard} we can easily
%   show that $\SSS$ has a hitting set of size $k$ \myiff $F$ has
%   a weak $\RHORN$\hy backdoor set of size $k$. The proposition follows.
% \end{proof}
%
\begin{proof}
  Similarly to the proof of Proposition~\ref{pro:weak-hard} we reduce
  from HS.  As gadgets we use formulas of the form $G=\{\{z_1, \neg x_1,\neg z_2\}$,
  $\{z_2, \neg x_2,\neg z_3\}, \dots, \{z_s, \neg x_s,\neg z_{s+1}\}$,
  $\{\neg z_{1}, z_{s+1}\}$, $\{\neg z_{1}, \neg z_{s+1}\}$,
  $\{z_{1}, z_{s+1}\}\}$, where $x_1,\dots,x_s$ are external
  variables and $z_1,\dots,z_{s+1}$ are internal variables. $G$ can be
  considered as being obtained form the complete formula $\SB \{
  z_{1}^\epsilon, z_{s+1}^\delta \} \SM \epsilon,\delta \in \{0,1\} \SE$
  by ``subdividing'' the clause $\{z_1, \neg z_{s+1} \}$. $G \notin \RHORN$
  but $G[x_i=0]\in \RHORN$. In fact, $r_X(G[x_i=0])\in \HORN$ for
  $X=\{z_{i+1},\dots,z_{s+1}\}$, hence no external variable needs to be
  renamed. Moreover, we can satisfy $G[x_i=0]$ by setting all external
  variables and $z_1$ to~0, and by setting $z_{s+1}$ to~1.

  Let $(\SSS,k)$, $\SSS=\{S_1,\dots,S_m\}$, be an instance of HS.  For
  each $S_i$ we construct $k+1$ gadgets $G_i^1,\dots,G_i^{k+1}$, each
  having $S_i$ as the set of its external variables, and the internal
  variables are new variables only used inside a gadget. We let $F$ to
  be the union of all such gadgets $G_i^j$ for $1\leq i \leq m$ and
  $1\leq j \leq k+1$.

  Similar to the proof of Proposition~\ref{pro:weak-hard} we can easily
  show that $\SSS$ has a hitting set of size $k$ \myiff $F$ has
  a weak $\RHORN$\hy backdoor set of size $k$. The proposition follows.
\end{proof}

According to Propositions~\ref{prop:fpt-clause-defined-3} and
\ref{pro:clu}, \textsc{Weak $\CCC$-Backdoor Set
  Detection} is fixed-parameter tractable for certain base classes
$\CCC$ and  input formulas in 3CNF. For the classes $\CCC$ covered
by Propositions~\ref{prop:fpt-clause-defined-3} and
\ref{pro:clu} it holds that  for every 3CNF formula $F\notin \CCC$ we can
find a set of variables of bounded size, an ``obstruction'', from
which at least one variable must be in any weak $\CCC$\hy backdoor
set of $F$.  Hence a weak $\CCC$ backdoor set of size at most $k$ can
be found by means of a bounded search tree algorithm.  The next result
shows that fixed-parameter tractability also prevails for the base
class $\FOREST$. However, the algorithm is considerably more
complicated, as in this case we do not have obstructions of bounded
size.

\begin{theorem}[\cite{GaspersSzeider11a}]\label{the:weak-forest}
 \textsc{Weak $\FOREST$-Backdoor Set Detection} is fixed-parameter
  tractable for 3CNF formulas.
\end{theorem}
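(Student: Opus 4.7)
The plan is to combine a structural characterization of weak $\FOREST$\hy backdoor sets with a bounded search tree driven by cycles in the incidence graph $G(F)$. The starting observation is that $B\subseteq \var(F)$ together with a witnessing assignment $\tau\in 2^B$ forms a weak $\FOREST$\hy backdoor of $F$ if and only if (i) every cycle of $G(F)$ contains either a variable-vertex of $B$ or a clause-vertex $C$ with $C\cap\text{true}(\tau)\neq\emptyset$, and (ii) $F[\tau]$ is satisfiable. Because $F[\tau]\in\FOREST$, its satisfiability can be decided in polynomial time by propagation along the tree structure, so the algorithm must produce a small set $B$ and an assignment $\tau$ that simultaneously break all cycles of $G(F)$ and admit a consistent extension to the remaining forest.

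The natural branching rule is: while the residual incidence graph is not a forest, pick a cycle $Z$ and branch on its vertices, trying each variable $x\in V(Z)\cap\var(F)$ with each truth value $\epsilon\in\{0,1\}$, and trying each clause $C\in V(Z)\cap F$ by choosing one of the (at most three, since $F$ is in $3$CNF) literals $x^\epsilon\in C$ to set to true. Each successful branch decreases the budget $k$ by one, so the depth of the search tree is bounded by $k$. The main obstacle—which distinguishes this case from Propositions~\ref{prop:fpt-clause-defined-3} and~\ref{pro:clu}—is that $Z$ can be arbitrarily long, so the branching factor is a priori unbounded.

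To overcome this, I would introduce a preprocessing phase that exploits the fact that clause-vertices of $G(F)$ have degree at most three when $F$ is in $3$CNF. Degree-one clauses force assignments through their literals and can be resolved, while maximal degree-two chains of clause- and variable-vertices along a would-be cycle can be contracted after recording the constraints that any witnessing $\tau$ must satisfy along the chain (using labels on the reduced edges). After exhaustive reduction, every remaining cycle has length bounded by a function of~$k$, so the branching factor becomes bounded and the whole search tree has FPT size. The hardest part of the proof is verifying the soundness of these reductions in the weak-backdoor setting: unlike in the strong case, we do not require every assignment of $B$ to move $F$ into $\FOREST$, but only one, so the contracted labels must faithfully track which partial assignments admit a satisfying extension, and a final consistency check must be performed at each leaf of the search tree where a candidate backdoor is produced.
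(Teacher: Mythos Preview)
Your structural characterization of weak $\FOREST$\hy backdoor sets is correct, and the basic branching scheme (put some variable into $B$ with a fixed value so as to delete a variable-vertex or a clause-vertex on the chosen cycle~$Z$) is sound. The genuine gap is the preprocessing step. You claim that after removing degree-one clauses and contracting maximal degree-two chains, ``every remaining cycle has length bounded by a function of~$k$''. This is not true. Consider a 3CNF formula in which every clause has exactly three variables and every variable occurs in exactly three clauses; then $G(F)$ is a $3$-regular bipartite graph. Such graphs exist with arbitrarily large girth, and your reductions do nothing to them (there are no degree-one or degree-two vertices to simplify). Hence the shortest cycle~$Z$ may have length unbounded in~$k$, and your branching factor is unbounded. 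More generally, bounding vertex degrees never bounds girth, so no purely local simplification of $G(F)$ will yield short cycles.

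The paper's proof avoids this obstacle by a completely different mechanism: it first invokes Bodlaender's theorem to obtain either a feedback vertex set of $G(F)$ of size $f(k)$, in which case $G(F)$ has treewidth bounded by~$f(k)$ and the whole problem is finished off by an MSO model-checking argument, or a family of $2k+1$ vertex-disjoint cycles. In the latter case, instead of branching on the vertices of one cycle, the argument analyses how a single backdoor variable can ``kill'' many of these cycles simultaneously (internally versus externally), and through a case analysis extracts a set $S^*$ of $O(4^k k^6)$ candidate variables guaranteed to intersect every size-$k$ solution. Your proposal contains neither the treewidth/MSO branch nor any analogue of this multi-cycle killer analysis, and the contraction idea cannot replace them.
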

\begin{proof}[Sketch] 
  We sketch the fpt algorithm from \cite{GaspersSzeider11a} deciding
  whether a 3CNF formula has a weak $\FOREST$\hy backdoor set of size
  $k$. We refer to \cite{GaspersSzeider11a} for the full details and the
  correctness proof.  Let $G$ denote the incidence graph of $F$.  The
  first step of the algorithm runs an fpt algorithm (with parameter
  $k'$) by Bodlaender \cite{Bodlaender94} that either finds $k' = 2k+1$
  vertex-disjoint cycles in $G$ or a feedback vertex set of $G$ of size
  at most $12k'^2-27k'+15$.
 
  In case a feedback vertex set $X$ is returned, a tree decomposition of
  $G \setminus X$ of width~1 is computed and $X$ is added to each bag of
  this tree decomposition.  As the \textsc{Weak $\FOREST$-Backdoor Set
    Detection} problem can be defined in Monadic Second Order Logic, a
  meta-theorem by Courcelle \cite{Courcelle90} can use this tree
  decomposition to conclude.
 
  In case Bodlaender's algorithm returns $k'$ vertex-disjoint cycles,
  the algorithm finds a set $S^*$ of $O(4^k k^6)$
  variables such that any weak $\FOREST$-backdoor set of size $k$
  contains at least one variable from $S^*$.  In this case, the algorithm
  recurses by considering all possibilities of assigning a value to a
  variable from $S^*$.
 
  Let $C_1, \ldots, C_{k'}$ denote the variable-disjoint cycles returned
  by Bodlaender's algorithm.  Consider a variable $x \in \var(F)$ and a
  cycle $C$. We say that $x$ \emph{kills} $C$ \emph{internally} if $x\in
  C$. We say that $x$ \emph{kills} $C$ \emph{externally} if $x \notin C$
  and $C$ contains a clause $u \in F$
  such that $x\in \var(u)$.
 
  As our $k'$ cycles are all vertex-disjoint, at most $k$ cycles may be
  killed internally. The algorithm goes through all choices of $k$
  cycles among $C_1, \ldots, C_{k'}$ that may be killed internally.  All
  other cycles, say $C_1, \ldots, C_{k+1}$, are not killed internally
  and need to be killed externally. The algorithm now computes a set $S\subseteq \var(F)$
  of size $O(k^6)$
  such that any weak $\FOREST$-backdoor set of size $k$, which is a subset
  of $\var(F) \setminus \bigcup_{i=1}^{k+1} \var(C_i)$,
  contains at least one variable from $S$. The union of all such $S$,
  taken over all choices of cycles to be killed internally, forms then
  the set $S^*$ that was to be computed.

  For each cycle from $C_1, \ldots, C_{k+1}$, compute
  its set of external killers in $\var(F) \setminus \bigcup_{i=1}^{k+1} \var(C_i)$.
  Only these external killers are considered from now on.
  If one such cycle
  has no such external killer, then there is no solution with the current
  specifications and the algorithm backtracks.  For each $i, 1\le i\le
  k+1$, let $x_i$ denote an external killer of $C_i$ with a maximum
  number of neighbors in $C_i$. The algorithm executes the first
  applicable from the following rules.
  \begin{description}
  \item[Multi-Killer Unsupported] If there is an index $i, 1\le i\le
    k+1$ such that $x_i$ has $\ell \ge 4k$ neighbors in $C_i$ and at
    most $4k^2+k$ external killers of $C_i$ have at least $\ell/(2k)$
    neighbors in $C_i$, then include all these external killers in $S$.
  \item[Multi-Killer Supported] If there is an index $i, 1\le i\le k+1$
    such that $x_i$ has $\ell \ge 4k$ neighbors in $C_i$ and more than
    $4k^2+k$ external killers of $C_i$ have at least $\ell/(2k)$ neighbors
    in $C_i$, then set $S = \set{x_i}$.
  \item[Large Overlap] If there are two cycles $C_i,C_j,1\le i \neq j
    \le k+1,$ with at least $16k^4+k$ common external killers, then
    set $S = \emptyset$.
  \item[Small Overlap] Otherwise, include in $S$ all vertices that are
    common external killers of at least two cycles from $C_1,\ldots,
    C_{k+1}$.
\end{description}
The algorithm recursively checks for each $s\in S^*$ whether the
formulas $F[s=0]$ and $F[s=1]$ have a weak $\FOREST$-backdoor
set of size $k-1$ and returns \textsc{Yes} if any such recursive call
was successful and \textsc{No} otherwise.
\end{proof}

\section{Detecting Strong Backdoor Sets}

\begin{proposition}[\cite{NishimuraRagdeSzeider04-informal}]
  \sloppy \textsc{Strong $\CCC$-Backdoor Set Detection} is fixed-parameter
  tractable for every base class $\CCC\in \SCHAEFER$. For $\CCC\in
  \{\ZEROV,\ONEV\}$, the problem is even solvable in polynomial time.
\end{proposition}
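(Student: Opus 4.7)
The plan is to exploit the fact that every class in $\SCHAEFER$ is clause-defined, so $F[\tau] \in \CCC$ reduces to checking each clause of $F[\tau]$ individually. The central observation I would establish first is this: if $B$ is a candidate backdoor set and $C \in F$ is a clause, then by choosing $\tau \in 2^B$ that sets every variable in $B \cap \var(C)$ so as to falsify its occurrence in $C$, the reduced clause $C[\tau]$ consists precisely of the literals of $C$ on variables in $\var(C) \setminus B$. Hence $B$ is a strong $\CCC$-backdoor if and only if, for every $C \in F$, the restriction of $C$ to the variables outside $B$ belongs to $\CCC$.

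For $\CCC \in \{\ZEROV, \ONEV\}$ this criterion immediately gives polynomial-time solvability. Take $\ZEROV$: the restriction of $C$ to $\var(C)\setminus B$ has a negative literal only if $C$ already has a negative literal on a variable outside $B$. So if $F \notin \ZEROV$, witnessed by a clause without any negative literal, then no backdoor set of any size works, while if $F \in \ZEROV$ then $B = \emptyset$ works. The case of $\ONEV$ is symmetric.

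For the remaining three classes I would use a bounded-search-tree algorithm. For $\HORN$ the criterion becomes ``every clause has at most one positive-literal variable outside $B$.'' As long as some clause $C$ violates this, pick two positive-literal variables $x,y \in \var(C) \setminus B$; any strong $\HORN$-backdoor must contain at least one of $x,y$, so branch on adding $x$ or $y$ to $B$. This yields a search tree of depth at most $k$ and branching factor $2$, giving $O(2^k\cdot \text{poly})$ time. The case $\HORN^-$ is symmetric, using negative literals. For $\TWOCNF$ the criterion is ``at most two variables of $C$ outside $B$''; pick any three variables of an oversized clause, branch on which to place into $B$, giving an $O(3^k \cdot \text{poly})$ algorithm. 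At each leaf we verify the clause-wise condition in polynomial time.

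The only subtle point, and the one worth writing carefully, is justifying the ``worst-case $\tau$'' argument that converts ``strong backdoor'' (a universal quantifier over $2^B$) into the purely combinatorial condition on $\var(C) \setminus B$; once that reduction is in hand, the branching rules and their correctness are routine, and the polynomial-time cases for $\ZEROV,\ONEV$ fall out as a corollary.
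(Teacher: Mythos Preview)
Your argument is correct. For $\HORN$, $\HORN^-$, and $\TWOCNF$ it is essentially the paper's proof in different clothing: your branching on two positive literals of a clause (resp.\ three variables of a long clause) is precisely the standard bounded-search-tree for \textsc{Vertex Cover} on the positive primal graph (resp.\ \textsc{3-Hitting Set} on size-three variable subsets of clauses), which is exactly how the paper phrases it. For $\ZEROV$ and $\ONEV$ you take a different---and in fact more accurate---route. The paper asserts that the smallest strong $\ZEROV$-backdoor set of $F$ equals $\bigcup\{\var(C):C\in F\text{ positive}\}$; but if $F$ contains a positive clause $C$, then the assignment setting every variable of $C$ to $0$ leaves the empty clause in the reduct, which is not $0$-valid, so that set is not a strong $\ZEROV$-backdoor at all. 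Your observation that $F$ admits a strong $\ZEROV$-backdoor (of any size) if and only if $F\in\ZEROV$ already is the correct characterization, and it still delivers the polynomial-time conclusion of the proposition.
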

\begin{proof}  
%The proof of this proposition is based on the following observations.
  \sloppypar Consider a CNF formula $F$.  Strong $\HORN$\hy backdoor
  sets of $F$ are exactly the vertex covers of the positive primal graph
  of $F$, whose vertex set is $\var(F)$, two variables are joined by an
  edge if they appear together positively in a clause. Strong
  $\HORN^-$\hy backdoor sets can be characterized symmetrically.  Strong
  $\TWOCNF$\hy backdoor sets of $F$ are exactly the hitting sets of the
  hypergraph whose vertex set is $\var(F)$ and whose hyperedges are all
  the subsets $e\subseteq \var(F)$ of size three such that $e\subseteq
  \var(C)$ for a clause $C\in F$. Thus \textsc{Strong $\CCC$-Backdoor
    Set Detection} for $\CCC\in \{\HORN$, $\HORN^-$, $\TWOCNF\}$ can be
  accomplished by fpt algorithms for \textsc{Vertex Cover} \cite{ChenKanjXia10} and
  \textsc{3-Hitting Set} \cite{Fernau10}.
  The smallest strong $\ONEV$-backdoor set of $F$ is exactly the union
  of $\var(C)$ for all negative clauses $C\in F$, the smallest strong
  $\ZEROV$-backdoor set of $F$ is exactly the union of $\var(C)$ for all
  positive clauses $C\in F$.
\end{proof}

\begin{proposition}
  \textsc{Strong $\RHORN$-Backdoor Set Detection} is $\W[2]$\hy hard.
\end{proposition}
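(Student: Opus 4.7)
The plan is to adapt the parameterized reduction from \textsc{Hitting Set} used above for the weak case, producing a strong analogue of the or-gadget. Concretely, we would design a CNF formula $G$ with external variables $X=\{x_1,\dots,x_s\}$ and fresh internal variables satisfying three properties: (i) $G\notin\RHORN$, (ii) $G[x_i=\epsilon]\in\RHORN$ for every $x_i\in X$ and every $\epsilon\in\{0,1\}$, and (iii) the witnessing Horn renaming in~(ii) can be chosen so that only internal variables of $G$ are renamed. Property~(iii) is what makes the gadget usable in the global construction: a renaming of an external variable in one gadget copy would otherwise risk clashing with an opposite requirement imposed by another copy that shares that external variable.

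Given such a gadget, the rest of the reduction will mirror the proof of Proposition~\ref{pro:weak-hard}. For a \textsc{Hitting Set} instance $(\SSS=\{S_1,\dots,S_m\},k)$ we would take $k+1$ variable-disjoint copies $G_i^1,\dots,G_i^{k+1}$ per set $S_i$, each with external variables $S_i$ and fresh internal variables unique to the copy, and form $F=\bigcup_{i,j}G_i^j$. From a hitting set $B$ of size $k$ we would conclude that $B$ is a strong $\RHORN$-backdoor of $F$: for any $\tau\in 2^B$ and any $(i,j)$, the condition $B\cap S_i\neq\emptyset$ ensures that some external variable of $G_i^j$ is assigned, so $G_i^j[\tau]\in\RHORN$ via an internal-only renaming by~(ii) and~(iii); the disjoint union of these internal renamings is then a global renaming of $F[\tau]$ that makes every clause Horn. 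Conversely, if $B$ is a strong $\RHORN$-backdoor of size $k$, then by~(i) and the clause-inducedness of $\RHORN$ we have $B\cap\var(G_i^j)\neq\emptyset$ for every $(i,j)$, and the pigeonhole argument across the $k+1$ copies forces $B$ to contain a variable common to two copies of $G_i$, which must be an external variable from $S_i$, giving a hitting set.

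The main obstacle is constructing the gadget. A natural starting point is a minimal non-$\RHORN$ core on a constant number of internal variables---for example the four 2-clauses $\{z,z'\},\{\neg z,z'\},\{z,\neg z'\},\{\neg z,\neg z'\}$, or a three-pair triangle that produces a parity contradiction in Lewis's 2-SAT characterisation of $\RHORN$---attached to the external variables $x_1,\dots,x_s$ via clauses chosen so that fixing any single $x_i$ to either value eliminates enough of the core to leave a renamably-Horn sub-formula. The delicate point is property~(iii): because a single external variable $x\in S_i\cap S_{i'}$ typically occurs in many different gadget copies that are simultaneously processed under $\tau$, any renaming of $x$ demanded by one copy can be incompatible with what another copy requires, and membership of $F[\tau]$ in $\RHORN$ can then fail even when $B$ is a hitting set. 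Achieving~(iii) is expected to require careful choice of the polarities with which $x_1,\dots,x_s$ appear in the core clauses, together with auxiliary Horn padding clauses that absorb any would-be external-variable renaming; verifying these properties will be the main technical work of the proof.
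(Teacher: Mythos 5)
Your reduction skeleton (Hitting Set, $k+1$ variable-disjoint copies per set, pigeonhole in the converse direction) matches the paper's, but the heart of the proof --- an actual gadget together with a mechanism guaranteeing $F[\tau]\in\RHORN$ for \emph{every} $\tau\in 2^B$ --- is missing: you explicitly defer the construction and the verification of properties (i)--(iii) as ``the main technical work.'' Worse, the route you propose is genuinely obstructed. If every single-external-variable assignment $x_i=\epsilon$ must leave a formula that is Horn after renaming \emph{internal} variables only, then (for $|X|\ge 3$) no clause of the gadget may contain two positive external literals: such a clause survives, with both positive externals intact, under some assignment $x_j=\epsilon$ with $j\ne i,l$, and no internal renaming can repair it. Consequently a clause can be satisfied by $x_i=1$ only if $x_i$ is its unique positive external literal, so each external variable needs its own dedicated ``escape clause'' for the value $1$, while simultaneously the whole gadget must fail to be renamable Horn even when external variables are allowed to be renamed. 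Reconciling these demands is precisely the hard part, and your sketch offers only the hope that ``careful choice of polarities'' and ``auxiliary Horn padding'' will work.

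The paper avoids this entirely by dropping your requirement (iii) and arguing globally rather than gadget-by-gadget. Its gadget for $S_i$ consists of $S_i\cup\{z_1,z_2\}$, $\{z_1,\neg z_2\}$, $\{\neg z_1,z_2\}$, and $\ol{V}\cup\{\neg z_1,\neg z_2\}$, where $V=\bigcup_i S_i$ is the \emph{entire} universe of the Hitting Set instance, not just $S_i$. For a hitting set $B$ and $\tau\in 2^B$ there are then only two cases: if $\tau$ is the all-$1$ assignment, the hitting-set property deletes the unique positive clause from every gadget and what remains is already Horn (empty renaming); if $\tau$ sets any variable to $0$, the single big negative clause is deleted from every gadget and renaming \emph{all} variables of $F$ makes the rest Horn. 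Because the renaming is either empty or total, the consistency problem across gadgets sharing external variables --- the very problem your property (iii) was designed to solve locally --- never arises. Without either this global trick or a concrete gadget satisfying your (i)--(iii), your argument does not go through.
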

\begin{proof}
  The proof uses a reduction from HS similar to the proof of
  Proposition~\ref{pro:weak-hard}.  An instance $(\SSS,k)$,
  $\SSS=\{S_1,\dots,S_m\}$, of HS is reduced to a formula $F$ which is
  the union of certain gadgets $G_i^j$ for $1\leq i \leq m$ and $1\leq j
  \leq k+1$.  Let $V=\bigcup_{i=1}^m S_i$.  A gadget $G_i^j$ contains
  the four clauses $S_i\cup \{z_1,z_2\}$, $\{z_1,\neg z_2\}$, $\{\neg
  z_1,z_2\}$, and $\ol{V} \cup \{\neg z_1,\neg z_2\}$, where $z_1,z_2$
  are internal variables that do not occur outside the gadget.  Let
  $B\subseteq V$ be a hitting set of $\SSS$ and let $\tau\in 2^B$.  If
  $\tau$ sets at least one variable to $0$, then $\tau$ removes from
  each gadget the only negative clause, hence $r_{\var(F)}(F[\tau])\in
  \HORN$. On the other hand, if $\tau$ sets all variables from $B$ to
  $1$, then it removes from each gadget the only positive clause ($B$ is
  a hitting set). Hence, $F[\tau]\in \HORN$ in this case. Consequently
  $B$ is a strong $\RHORN$\hy backdoor set of~$F$.  Conversely, assume
  $B$ is a strong $\RHORN$\hy backdoor set of~$F$. Let $\tau \in 2^B$ be
  the all-1-assignment. For the sake of contradiction, assume there is a set $S_i$
  such that $B\cap S_i=\emptyset$. Since $\Card{B}=k$, $B\cap
  \var(G_i^j)=\emptyset$ for some $1\leq j \leq k+1$.  Now $F[\tau]$
  contains the subset $G_i^j[\tau]=\{S_i\cup \{z_1,z_2\}, \{z_1,\neg
  z_2\}, \{\neg z_1,z_2\}, \{\neg z_1,\neg z_2\} \}$ which is not
  renamable Horn, hence $B$ is not a strong $\RHORN$\hy backdoor set of
  $F$, a contradiction. Hence $B$ is a hitting set of $\SSS$.
\end{proof}

It is not known whether \textsc{Strong $\FOREST$-Backdoor Set Detection}
is fixed-parameter tractable nor whether \textsc{Strong
  $\RHORN$-Backdoor Set Detection} is fixed-parameter tractable for 3CNF
formulas. For the former problem, however, we know at least an fpt
approximation~\cite{GaspersSzeider11a}; see
Theorem~\ref{the:strong-tree} below.

The following result is shown by a reduction from 
\textsc{Cyclic Monotone Circuit Activation}, similarly to
Theorem~\ref{the:subsolver-weak}.
\begin{theorem}[\cite{Szeider05d}]
  \textsc{Strong $\CCC$-Backdoor Set Detection} is $\W[P]$\hy complete
  for every base class  $\CCC\in \SUBSOLVER$,
  even for formulas in 3CNF.
\end{theorem}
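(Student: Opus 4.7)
The plan is to prove the statement in two parts: membership in $\W[P]$ and $\W[P]$\hy hardness, with the latter obtained by a reduction from \textsc{Cyclic Monotone Circuit Activation} that adapts the one used for Theorem~\ref{the:subsolver-weak}.

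For membership, I would argue directly from the definition of $\W[P]$: using $k \log n$ nondeterministic bits I guess a candidate set $B\subseteq \var(F)$ with $\Card{B}=k$, and then deterministically verify that $F[\tau]\in\CCC$ for each of the $2^k$ assignments $\tau\in 2^B$. Since the associated subsolver runs in polynomial time on each $F[\tau]$, the verification costs $2^k\cdot n^{O(1)} = f(k)\cdot n^{O(1)}$, as required by the $\W[P]$ definition. Note that the problem is even trivially in $\FPT$ once the backdoor has been guessed; the point is that the guessing step fits the nondeterminism budget of $\W[P]$.

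For hardness, I would lift the weak-backdoor reduction from Theorem~\ref{the:subsolver-weak}. Starting from an instance of \textsc{Cyclic Monotone Circuit Activation} with circuit $\mathcal{G}$ and parameter $k$, I would introduce one variable $x_g$ per gate $g$ and encode each AND/OR gate's propagation rule by short clauses (for example, $\{\neg x_{g_1}, \neg x_{g_2}, x_g\}$ for an AND gate with inputs $g_1,g_2$, and the two binary clauses $\{\neg x_{g_1}, x_g\}$, $\{\neg x_{g_2}, x_g\}$ for an OR gate), so that setting the input-gate variables to $1$ triggers the chosen subsolver (unit propagation, pure literal elimination, or both) to cascade activation through the whole circuit and land the formula in $\CCC$. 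To force a strong backdoor set of size $k$ to consist only of input-gate variables, I would duplicate each non-input gate gadget $k+1$ times, exactly as in the replication argument of Proposition~\ref{pro:weak-hard}; then any $k$-variable backdoor set must miss at least one copy of every non-input gadget, and so cannot use non-input variables.

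The main obstacle, and the reason this theorem does not follow as a corollary of Theorem~\ref{the:subsolver-weak}, is the strong requirement that \emph{every} $\tau\in 2^B$ send $F$ into $\CCC$, not only the all-$1$ assignment that corresponds to activation. To handle this I would design the gate gadgets symmetrically, so that setting an input-gate variable $x_g\in B$ to $0$ also leaves the formula tractable for the subsolver---for instance, by adding supplementary clauses that make $x_g=0$ either disable all clauses in which $x_g$ occurs or leave them in a shape that the subsolver still decides, so the remaining portion stays in $\CCC$. With both modifications in place the correspondence becomes tight: $B$ is a strong $\CCC$\hy backdoor set of size $k$ if and only if the input gates it selects form an activating set of $\mathcal{G}$. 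Keeping the whole construction within 3CNF means that the gadgets must be small enough to have clause size at most three after every possible restriction by $\tau$, which is the delicate clause-level bookkeeping carried out in~\cite{Szeider05d}.
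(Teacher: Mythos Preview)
Your proposal matches the paper's approach, which is simply a citation to \cite{Szeider05d} together with the remark that the reduction is from \textsc{Cyclic Monotone Circuit Activation}, ``similarly to Theorem~\ref{the:subsolver-weak}''. Your membership argument is standard and correct, and you have correctly isolated the one genuine obstacle in passing from weak to strong backdoors: every assignment $\tau\in 2^B$, not only the all-$1$ assignment, must land the formula in~$\CCC$.

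That said, the specific mechanism you sketch for this obstacle does not quite work. If setting $x_g=0$ merely ``disables all clauses in which $x_g$ occurs'', then for an activating set $B$ and a mixed assignment $\tau$ that sets a proper subset $S\subsetneq B$ to~$1$ and the rest to~$0$, propagation from $S$ alone will in general not reach all gates (since $S$ need not be activating), and the residual formula is not obviously in~$\CCC$. The construction in \cite{Szeider05d} sidesteps this by making activation \emph{value-oblivious}: the gate gadgets are arranged so that assigning a gate variable to \emph{either} truth value triggers the subsolver to propagate to the successor gates, so that ``$x_g$ is assigned'' (rather than ``$x_g=1$'') is what encodes ``gate $g$ is active''. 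Under that design every $\tau\in 2^B$ cascades through the whole circuit whenever $B$ is activating, while any $\tau$ over a non-activating $B$ leaves some gadget unresolved; this is what makes the strong and weak cases coincide. Your duplication argument for confining the backdoor to gate variables and your remark about keeping the construction in 3CNF are both on target.
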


The bounded search tree method outlined above for \textsc{Weak
  $\CLU$-Backdoor Set Detection} for 3CNF formulas can clearly be
adapted for strong backdoors. Hence we get the following result.
\begin{proposition}
  \textsc{Strong $\CLU$-Backdoor Set Detection} is fixed-parameter
  trac\-ta\-ble for 3CNF formulas.
\end{proposition}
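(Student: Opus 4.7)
The plan is to adapt the bounded search tree algorithm sketched before Proposition~\ref{pro:clu} for \emph{weak} $\CLU$-backdoors, keeping the same obstruction structure. Recall from Nishimura, Ragde, and Szeider that a CNF formula lies in $\CLU$ \myiff it contains neither an overlap obstruction (two clauses that overlap but do not clash) nor a clash obstruction (three clauses $D_1,D_2,D_3$ with $D_1,D_2$ clashing, $D_2,D_3$ clashing, but $D_1,D_3$ not clashing). In a 3CNF formula these obstructions involve at most five and seven variables respectively, and both $\CLU$-membership and the extraction of such an obstruction are polynomial-time.

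The key branching lemma for the strong case is: if $B$ is a strong $\CLU$-backdoor of $F$ and if for some $\tau\in 2^B$ the formula $F[\tau]$ contains an obstruction $O$, then $B\cap\var(O)\neq\emptyset$. For suppose $B\cap\var(O)=\emptyset$; then $\tau$ does not touch $\var(O)$, so the clauses of $O$ appear unchanged in $F[\tau]$, witnessing $F[\tau]\notin\CLU$ and contradicting the strong-backdoor property. The same reasoning applies to any extension of $B$ we might attempt to build later: a backdoor that extends $B$ without ever hitting $\var(O)$ is ruled out, so at least one variable of $\var(O)$ must be added to $B$.

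This yields the expected fpt algorithm. Maintain a current candidate set $B$ (initially empty) and a budget $k$. At each node of the search tree, enumerate the at most $2^{|B|}$ assignments $\tau\in 2^B$ and test $F[\tau]\in\CLU$; if every restriction is clustering, report $B$. Otherwise pick a witness pair $(\tau,O)$ and branch on the at most seven variables of $\var(O)$, recursing on $B\cup\{v\}$ with budget $k-1$. Correctness is immediate from the branching lemma, since any strong $\CLU$-backdoor of size at most $k$ extending $B$ must take one of these branches. The recursion has depth at most $k$ and branching factor at most $7$, and each node performs at most $2^k$ polynomial-time membership checks, giving a running time of $O(14^k\cdot\mathrm{poly}(|F|))$. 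The only substantive point to verify is the branching lemma; the rest is bookkeeping that mirrors the weak case.
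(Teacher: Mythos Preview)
Your algorithm is correct and is precisely the adaptation the paper has in mind (its own proof is a one-line pointer back to the bounded search tree for the weak case). One point to clean up: the ``key branching lemma'' as you state it is vacuous, since if $B$ is already a strong $\CLU$-backdoor then $F[\tau]\in\CLU$ for every $\tau\in 2^B$ and no obstruction $O$ can exist there; your proof of the lemma never actually uses the assumption $B\cap\var(O)=\emptyset$ and simply rederives this contradiction. The statement you need, and the one your algorithm and your follow-up sentence really use, separates the current candidate $B'$ from the target backdoor $B^*\supseteq B'$: if $\tau\in 2^{B'}$ and $O$ is an obstruction in $F[\tau]$, then $B^*\cap\var(O)\neq\emptyset$. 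The proof is the one you intended: extend $\tau$ to some $\tau^*\in 2^{B^*}$; if $(B^*\setminus B')\cap\var(O)=\emptyset$ then the clauses of $O$ survive unchanged in $F[\tau^*]$, contradicting $F[\tau^*]\in\CLU$. Since $\var(O)\subseteq\var(F[\tau])$ is automatically disjoint from $B'$, the witnessing variable lies in $B^*\setminus B'$, and the depth-$k$, branching-factor-$7$ search tree with $2^{|B'|}$ membership tests per node goes through exactly as you describe.
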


\subsection{Empty Clause Detection}

Dilkina \etal \cite{DilkinaGomesSabharwal07} suggested
to strengthen the concept of strong backdoor sets by means of
\emph{empty clause detection}. Let $\EEE$ denote the class of all CNF
formulas that contain the empty clause. For a base class $\CCC$ we put
$\CCC^\nil=\CCC\cup \EEE$; we call $\CCC^\nil$ the base class obtained
from $\CCC$ by adding empty clause detection.  Formulas often have
much smaller strong $\CCC^\nil$\hy backdoor sets than strong $\CCC$\hy
backdoor sets \cite{DilkinaGomesSabharwal07}.  Dilkina \etal show
that, given a CNF formula $F$ and an integer $k$, determining whether
$F$ has a strong $\HORN^\nil$\hy backdoor set of size $k$, is both
$\NP$\hy hard and $\coNP$\hy hard (here $k$ is considered just as part of
the input and not as a parameter). Thus, the non-parameterized search
problem for strong $\HORN$\hy backdoor sets gets harder when empty
clause detection is added. 
It turns out that also the parameterized problem gets harder when empty
clause detection is added.

% \begin{theorem}[\cite{Szeider08c}]\label{the:empty}
%   For every clause-induced base class $\CCC$ that contains at
%   least one satisfiable CNF formula the problem \textsc{strong
%     $\CCC^\nil$\hy backdoor set} is $\W[1]$\hy hard.
% \end{theorem}
%
\begin{theorem}[\cite{Szeider08c}]\label{the:empty}
  For every clause-induced base class $\CCC$ such that at
  least one satisfiable CNF formula does not belong to $\CCC$ the problem \textsc{strong
    $\CCC^\nil$\hy backdoor set} is $\W[1]$\hy hard.
\end{theorem}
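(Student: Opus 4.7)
The plan is to give a parameterized reduction from \textsc{Clique}, which is $\W[1]$\hy complete. Given an instance $(G,k)$ with $G=(V,E)$, and fixing once and for all a satisfiable CNF formula $H\notin\CCC$ (which exists by hypothesis), I would construct a CNF formula $F$ and an integer $k'$ polynomial in $k$ such that $G$ has a clique of size $k$ if and only if $F$ has a strong $\CCC^\nil$\hy backdoor set of size at most~$k'$.

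For each vertex $v\in V$ I would introduce a variable $x_v$, and for each non-edge $\{u,v\}\notin E$ install a \emph{gadget} consisting of several fresh copies of $H$, each clause of which is augmented with the literals $\neg x_u,\neg x_v$ together with a few private auxiliary variables. The gadget is engineered to have two behaviors under a truth assignment $\tau$ to a backdoor set $B$. In the \emph{full-trigger case}, where both $x_u,x_v\in B$ with $\tau(x_u)=\tau(x_v)=1$, the gadget reduces inside $F[\tau]$ to a clean copy of $H$ (with no empty clause contributed). Since $\CCC$ is clause-induced and $H\notin\CCC$, $F[\tau]$ then contains a subset that is not in $\CCC$, so $F[\tau]\notin\CCC$; the auxiliary clauses are arranged so that no empty clause appears in this case, hence $F[\tau]\notin\CCC^\nil$. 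In every remaining case, i.e., whenever it is not true that both $x_u,x_v$ belong to $B$ and both are set to $1$, the gadget is either removed from $F[\tau]$ (because some auxiliary clause is satisfied) or contributes the empty clause itself, so it does not prevent $F[\tau]\in\CCC^\nil$.

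To force candidate backdoor sets to live on $\{x_v:v\in V\}$ and to have the canonical shape $B=\{x_v:v\in K\}$ for some $K\subseteq V$, I would install $k+1$ variable-disjoint copies of each non-edge gadget, following the duplication argument used in the proof of Proposition~\ref{pro:weak-hard}. Under this setup, any strong $\CCC^\nil$\hy backdoor set of size at most~$k$ must be contained in $\{x_v:v\in V\}$, and the induced vertex set $K$ must avoid containing both endpoints of any non-edge, i.e., $K$ must be a clique. Conversely, if $K$ is a clique of size $k$, then at every non-edge gadget at most one of $x_u,x_v$ lies in $B=\{x_v:v\in K\}$, so property~(ii) above applies for every $\tau\in 2^B$, giving $F[\tau]\in\CCC^\nil$.

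The main obstacle is the gadget design itself. The hypothesis on $\CCC$ is minimal: only clause-inducedness and the existence of a satisfiable non-member, so the gadget must behave correctly uniformly across all such classes. In particular, in the partial-trigger cases we cannot appeal to any structural property of $\CCC$ to conclude that the residual (truncated copies of $H$) lies in $\CCC$; an empty clause must be forced by the auxiliary clauses alone. Simultaneously, in the full-trigger case we must expose an intact copy of $H$ without inadvertently producing an empty clause, so that $F[\tau]\notin\CCC^\nil$ genuinely follows. Balancing these two requirements is the delicate part of the construction, and it is precisely the mechanism by which empty clause detection pushes backdoor detection from fixed-parameter tractable (as for strong $\HORN$\hy backdoor detection) up to $\W[1]$\hy hard.
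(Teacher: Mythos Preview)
Your gadget specification cannot be realized. Take a non-edge $\{u,v\}$ with neither $x_u$ nor $x_v$ in the backdoor set $B$; this occurs for every non-edge whose endpoints both lie outside the candidate clique $K$. By your own duplication argument $B\subseteq\{x_v:v\in V\}$, so no variable of such a gadget lies in $B$, and $\tau$ leaves the gadget entirely untouched in $F[\tau]$: it is not removed, and none of its clauses can become empty (a clause becomes empty only when \emph{all} of its variables lie in $B$). The same obstruction hits the case where exactly one of $x_u,x_v$ is in $B$ and is set to $1$: the augmented $H$-clauses survive carrying the remaining literal $\neg x_v$, and no clause over the single backdoor variable $x_u$ can become empty here without also becoming empty in the full-trigger case, which you explicitly forbid. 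Consequently, in the forward direction with $K$ a clique, the all-$1$ assignment $\tau$ yields an $F[\tau]$ that contains the residues of all untouched and partially triggered gadgets, has no empty clause, and has no reason to lie in the arbitrary clause-induced class $\CCC$. So $B$ is not a strong $\CCC^\nil$\hy backdoor set and the reduction breaks.

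The route taken in~\cite{Szeider08c} avoids this trap by decoupling the two roles you assign to $H$. It does not embed $H$ into per-non-edge gadgets. Instead it reduces, via~\cite{FellowsSzeiderWrightson06}, from a $\W[1]$\hy hard problem whose positive instances are formulas $F_0$ admitting a set $B\subseteq\var(F_0)$ of size at most $k$ such that every $\tau\in 2^B$ produces an empty clause in $F_0[\tau]$ (equivalently, $\{C\in F_0:\var(C)\subseteq B\}$ is unsatisfiable). One then appends $k+1$ pairwise variable-disjoint copies of the fixed satisfiable $H\notin\CCC$ as a standalone block. Since $|B|\le k$, some copy of $H$ is disjoint from $B$, so $F[\tau]\supseteq H$ and hence $F[\tau]\notin\CCC$ for every $\tau$; thus $F[\tau]\in\CCC^\nil$ if and only if $F[\tau]$ contains the empty clause. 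Satisfiability of $H$ guarantees that any such empty clause must originate in $F_0$, recovering a solution to the source problem. The idea you are missing is precisely this separation: $H$ is used \emph{globally} to kill the $\CCC$-branch of $\CCC^\nil$ uniformly over all $\tau$, not \emph{locally} as a non-edge detector.
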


The theorem clearly applies to all base classes in $\SCHAEFER \cup
\{\RHORN, \FOREST\}$. The proof from \cite{Szeider08c} relies on a reduction
from \cite{FellowsSzeiderWrightson06}, where a reduction to 3CNF formulas is
also given. Thus, Theorem \ref{the:empty} also holds for 3CNF formulas.

\section{Detecting Deletion Backdoor Sets}
% Consider a class $\CCC$ for which \textsc{Strong $\CCC$-Backdoor Set
%   Detection} is $\W[1]$\hy hard, or no FPT result is known.
% If $\CCC$ is clause-induced, then every deletion $\CCC$-backdoor set is
% a strong $\CCC$-backdoor set, and so it makes sense to study the more
% relaxed problem \textsc{Strong $\CCC$-Backdoor Set
%   Detection} instead, with the hope to obtain fixed-parameter
% tractability.

% Hence $\RHORN$ is an interesting candidate for deletion backdoor
% sets. For the sake of completeness, we briefly note the complexities of
% \textsc{Strong $\CCC$-Backdoor Set Detection} for the other classes.

In this section we consider the parameterized complexity of
\textsc{Deletion $\CCC$\hy Backdoor Set Detection} for the various base
classes $\CCC$ from above.  For most of the classes the complexity is easily
established as follows.  For Schaefer classes, strong and deletion
backdoor sets coincide, hence the FPT results carry over.  The subsolver
classes are not clause-induced, hence it does not make sense to consider
deletion backdoor sets.  \textsc{Deletion $\FOREST$-Backdoor Set
  Detection} can be solved by algorithms for a slight variation of the feedback vertex set
problem, and is therefore FPT. One has only to make sure that the feedback
vertex set contains only variables and no clauses. This, however, can
be achieved by using algorithms for \textsc{Weighted Feedback Vertex
  Set} \cite{RamanSaurabhSubramanian06,ChenFominLiuLuVillanger08}.

It is tempting to use Chen \etal's FPT algorithm for directed
feedback vertex set \cite{ChenLiuLuOsullivanRazgon08} for the
detection of deletion backdoor sets. The corresponding base class would
contain all CNF formulas with acyclic \emph{directed} incidence graphs (the
orientation of edges indicate whether a variable occurs positively or
negatively).  Unfortunately this class is not suited as a base class
since it contains formulas where each clause contains either only
positive literals or only negative literals, and SAT is well known to
be NP-hard for such formulas~\cite{GareyJohnson79}.

Hence we are left with the classes $\CLU$ and $\RHORN$. 

For the detection of deletion $\CLU$-backdoor sets we can use overlap
obstructions and clash obstructions, as defined before Proposition
\ref{pro:clu}. With each obstruction, we associate a deletion pair which
is a pair of sets of variables. With an overlap obstruction
$\{C_1,C_2\}$, we associate the deletion pair
\begin{align*}
\{ \var(C_1 \cap C_2), \var((C_1 \setminus C_2) \cup (C_2 \setminus C_1) ) \},
\end{align*}
and with a clash obstruction $\{D_1,D_2,D_3\}$, we associate the deletion pair
\begin{align*}
\{ \var((D_1 \setminus D_3) \cap \overline{D_2}), \var((D_3 \setminus D_1) \cap \overline{D_2}) \}.
\end{align*}
For a formula $F$, let $G_F$ denote the graph with vertex set $\var(F)$ that has an edge $xy$
\myiff there is a deletion pair $\{X,Y\}$ of $F$ with $x\in X$ and $y\in Y$. 
Nishimura \etal \cite{NishimuraRagdeSzeider07} have shown that a set $X \subseteq \var(F)$
is a deletion $\CLU$-backdoor set of $F$ \myiff $X$ is a vertex cover of $G_F$.
Thus, the detection of a deletion $\CLU$-backdoor set of size $k$ can be reduced to the problem of
checking whether $G_F$ has a vertex cover of size $k$, for which there exist very fast algorithms
(see for example \cite{ChenKanjXia10}).

\begin{proposition}[\cite{NishimuraRagdeSzeider07}]\label{pro:clu-deletion}
  \textsc{Deletion $\CLU$-Backdoor Set Detection} is fixed-pa\-ra\-me\-ter
  tractable.
\end{proposition}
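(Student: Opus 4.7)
The plan is to reduce \textsc{Deletion $\CLU$-Backdoor Set Detection} to the fixed-parameter tractable problem \textsc{Vertex Cover}~\cite{ChenKanjXia10}. First I would invoke the Nishimura \etal characterization~\cite{NishimuraRagdeSzeider07} recalled before Proposition~\ref{pro:clu}: a CNF formula belongs to $\CLU$ \myiff it contains neither an overlap obstruction $\{C_1,C_2\}$ nor a clash obstruction $\{D_1,D_2,D_3\}$. Thus $X\subseteq \var(F)$ is a deletion $\CLU$-backdoor set of $F$ \myiff removing the variables in $X$ destroys every overlap and every clash obstruction of $F$.

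Next I would verify the central equivalence: an obstruction is destroyed by the deletion of $X$ \myiff $X$ contains at least one of the two sides of its associated deletion pair. For an overlap obstruction $\{C_1,C_2\}$ with deletion pair $\{S,D\}$, where $S=\var(C_1\cap C_2)$ and $D=\var((C_1\setminus C_2)\cup(C_2\setminus C_1))$: removing all of $S$ eliminates every shared literal, so the two modified clauses no longer overlap; removing all of $D$ forces both modified clauses to consist only of literals from $C_1\cap C_2$ with $X$-variables stripped, so they coincide and merge in $F-X$. Conversely, if $X$ misses a variable from $S$ and a variable from $D$, then the modified clauses still share a literal while still retaining distinct private parts, and the overlap obstruction survives. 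A parallel, slightly more involved case analysis handles clash obstructions, using the three pairwise clash/non-clash relations among $D_1,D_2,D_3$; here I would lean on the argument from \cite{NishimuraRagdeSzeider07}.

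Having established this, $X$ is a deletion $\CLU$-backdoor set \myiff for every deletion pair $\{A,B\}$ of $F$ we have $X\supseteq A$ or $X\supseteq B$. The edges of $G_F$ contributed by a single deletion pair $\{A,B\}$ form a complete bipartite graph on $A\cup B$, whose vertex covers are exactly the sets $X$ with $X\supseteq A$ or $X\supseteq B$. Intersecting over all deletion pairs yields the claim that $X$ is a vertex cover of $G_F$ \myiff $X$ is a deletion $\CLU$-backdoor set of $F$.

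Finally, $F$ has at most $O(\Card{F}^2)$ overlap obstructions and $O(\Card{F}^3)$ clash obstructions, so $G_F$ can be constructed in polynomial time; applying a state-of-the-art algorithm for \textsc{Vertex Cover} such as~\cite{ChenKanjXia10} then decides in fpt time whether $G_F$ admits a vertex cover of size at most $k$. The main obstacle is the careful bookkeeping in the destruction argument for clash obstructions, where the interaction of three clauses under partial deletion of variables requires checking several subcases to confirm that the stated deletion pair really captures all and only the variable sets whose removal destroys the obstruction.
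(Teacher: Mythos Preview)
Your proposal is correct and follows essentially the same approach as the paper: both define the deletion pairs for overlap and clash obstructions, build the graph $G_F$ whose edges encode these pairs, invoke the equivalence from \cite{NishimuraRagdeSzeider07} between deletion $\CLU$-backdoor sets and vertex covers of $G_F$, and then apply an fpt algorithm for \textsc{Vertex Cover}. The only difference is that the paper simply cites \cite{NishimuraRagdeSzeider07} for the key equivalence, whereas you sketch its proof (correctly for overlap obstructions, and deferring to the reference for clash obstructions); this is additional detail rather than a different route.
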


\sloppypar The remaining case is the class $\RHORN$. As noted by Gottlob and
Szeider~\cite{GottlobSzeider08} without proof (see also \cite{RazgonOSullivan09}),
one can show fixed-parameter tractability of
\textsc{Deletion $\RHORN$-Backdoor Set Detection} by reducing it to the
problem \textsc{2SAT Deletion}. The latter problem takes as input a 2CNF
formula and an integer $k$ (the parameter), and asks whether one can
make the formula satisfiable by deleting at most~$k$ clauses.
\textsc{2SAT Deletion}~was shown fixed-parameter tractable by Razgon and
O'Sullivan~\cite{RazgonOSullivan09}. Here we give the
above mentioned reduction.

\begin{lemma}\label{lem:rhorn-2satdel}
  There is a parameterized reduction from \textsc{Deletion
    $\RHORN$-Backdoor Set Detection} to \textsc{2SAT Deletion}.
\end{lemma}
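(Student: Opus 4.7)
The plan is to use a classical 2SAT encoding of the renamable-Horn property combined with a clause-robustification trick that converts variable-deletions into clause-deletions. For each literal $\ell$ of $F$ introduce a fresh Boolean variable $y_\ell$ with intended meaning ``$\ell$ is positive in the renaming encoded by $y$''. Let $\phi_0$ be the 2CNF formula with two sorts of clauses: (i) for each $x \in \var(F)$, a \emph{consistency} clause $\{y_x, y_{\neg x}\}$; and (ii) for each pair of distinct literals $\ell, \ell'$ co-occurring in some clause of $F$, a \emph{pair} clause $\{\neg y_\ell, \neg y_{\ell'}\}$. A short check (essentially Lewis's characterization of $\RHORN$) shows $\phi_0$ is satisfiable iff $F \in \RHORN$. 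More generally, for any $B \subseteq \var(F)$, removing from $\phi_0$ only the consistency clauses for $x \in B$ yields a satisfiable 2CNF iff $F - B \in \RHORN$: given a Horn renaming of $F - B$, extend it by $y_x = y_{\neg x} = 0$ for $x \in B$, so every pair clause with an endpoint over $B$ is trivially satisfied via $\neg y_\ell = 1$; conversely, any satisfying assignment restricts (after the harmless normalisation $y_{\neg x} := 0$ whenever $y_x = 1$, for $x \notin B$) to a Horn renaming of $F - B$.

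To turn ``delete consistency clauses'' into ``delete at most $k$ clauses of a single 2CNF'', I would \emph{robustify} the pair clauses so that each becomes undeletable within the budget. Replace each pair clause $\{\neg y_\ell, \neg y_{\ell'}\}$ by $k+1$ independent implication-chain encodings of $y_\ell \to \neg y_{\ell'}$, each using a fresh private auxiliary variable $w_i$: the 2-clauses $\{\neg y_\ell, w_i\}$ and $\{\neg w_i, \neg y_{\ell'}\}$ for $i = 1, \dots, k+1$. Each copy individually forces $\neg(y_\ell \wedge y_{\ell'})$, and allowing $y_\ell = y_{\ell'} = 1$ requires deleting at least one 2-clause from every one of the $k+1$ copies---totalling $k+1$ deletions, which is out of budget. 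Let $\phi$ denote the resulting 2CNF; the reduction outputs the \textsc{2SAT Deletion} instance $(\phi, k)$, constructed in polynomial time with parameter $k' = k$.

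Correctness then matches solutions in both directions. From a size-$k$ deletion $\RHORN$-backdoor set $B$ of $F$, delete the $k$ consistency clauses $\{y_x, y_{\neg x}\}$ for $x \in B$, extend a Horn renaming of $F - B$ by $y_x = y_{\neg x} = 0$ on $B$, and pick each $w_i$ to satisfy its two robustification clauses---always possible, since no copy is driven to contradiction. Conversely, any clause-deletion set $D$ of size at most $k$ with $\phi \setminus D$ satisfiable cannot touch any robustified pair clause (disabling even one takes $k+1$ deletions), so $D$ consists entirely of consistency clauses; the corresponding $B = \{x : \{y_x, y_{\neg x}\} \in D\}$ has $|B| \leq k$, and restricting a satisfying assignment of $\phi \setminus D$ to the $y_\ell$'s gives a Horn renaming of $F - B$. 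The main obstacle is the robustification: one must verify that the $k+1$ private copies each enforce the pair constraint independently of the others' auxiliaries, and that no budget-$k$ deletion distributed across different pair clauses can disable a single pair clause. Both reduce to short case analyses on the four values of $(y_\ell, y_{\ell'})$ per copy, after which the equivalence from the first paragraph closes the argument.
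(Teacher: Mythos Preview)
Your proposal is correct and essentially the same as the paper's proof: both encode renamable-Horn via a 2CNF over literal-indexed variables with ``at least one of $x,\neg x$ is positive'' clauses plus ``not both co-occurring literals are positive'' clauses, then robustify the latter with $k{+}1$ fresh implication pairs so that only the former are profitably deletable within budget---the paper merely routes this through an equivalent above-guarantee vertex cover formulation on the literal graph (your $y_\ell$ is the complement of the paper's cover-indicator) before arriving at the same 2CNF and the same robustification. One small imprecision in your reverse direction: a size-$k$ set $D$ \emph{can} touch robustification clauses, so ``$D$ consists entirely of consistency clauses'' is not literally forced; what matters is that each pair constraint retains at least one intact copy, so the satisfying assignment still respects every pair constraint, and your extraction of $B$ (or equivalently $B=\{x:\sigma(y_x)=\sigma(y_{\neg x})=0\}$) then goes through unchanged.
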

\begin{proof}
  Let $(F,k)$ be a given instance of \textsc{Deletion $\RHORN$-Backdoor
    Set Detection}.  We construct a graph $G=(V,E)$ by taking as
  vertices all literals $x^\epsilon$, for $x\in \var(F)$ and $\epsilon
  \in \{0,1\}$, and by adding two groups of edges. The first group
  consists of all edges $x^0,x^1$ for $x\in\var(F)$, the second group
  consists of all edges $x^\epsilon y^\delta$ for $x,y\in \var(F)$,
  $\epsilon,\delta \in \{0,1\}$, such that $x^\epsilon,y^\delta \in C$
  for some $C\in F$. Observe that the edges of the first group form a
  perfect matching $M$ of the graph $G$.
  
  \emph{Claim 1.} $F$ has a deletion $\RHORN$\hy backdoor set of size
    at most $k$ \myiff $G$ has a vertex cover with at most
    $\Card{M}+k$ vertices.
 
  ($\Rightarrow$) Let $B$ be a deletion $\RHORN$\hy backdoor set of $F$
  of size at most $k$ and $X \subseteq \var(F)\setminus B$ such that
  $r_X(F-B)\in \HORN$.  Let $N=\SB x^0 \SM x\in \var(F) \setminus X \SE
  \cup \SB x^1 \SM x\in X \SE$.  Let $K=\SB x^0,x^1 \SM x \in B \SE \cup
  N$. By definition, $\Card{K}=\Card{M}+\Card{B}\leq \Card{M}+k$. We
  show that $K$ is a vertex cover of $G$. Consider an edge $e=x^0x^1\in
  M$ of the first group.
%  If $x\in B$ then $x^0,x^1\in K$; if $x\notin
%  B$ and $x\in X$ we have $x^0\in K$; if $x\notin B$ and $x\notin X$ we
%  have $x^1\in K$; hence the edge $e$ is covered by $K$ in any case.
  If $x\in X$, then $x^1 \in N\subseteq K$ and if $x\notin X$, then $x^0 \in N\subseteq K$.
  Hence $e$ is covered by $K$.
  It
  remains to consider an edge $f=x^\epsilon y^\delta$ of the second
  group. If $x\in B$ or $y\in B$, then this edge is covered by
  $K$. Hence assume $x,y \notin B$.  By construction of $G$, there
  is a clause $C\in F$ with $x^\epsilon,y^\delta \in C$. Since $x,y
  \notin B$, there is also a clause $C'\in F-B$ with
  $x^\epsilon,y^\delta \in C$.  Since $C'$ corresponds to a Horn clause
  $C'' \in r_X(F-B)$, at least one of the literals $x^\epsilon,y^\delta$
  belongs to $N$, and hence $K$ covers the edge~$f$.  Hence the first
  direction of Claim~1 follows.

  ($\Leftarrow$) Let $K$ be a vertex cover of $G$ with at most
  $\Card{M}+k$ vertices. Let $B\subseteq \var(F)$ be the set of all
  variables $x$ such that both $x^0,x^1\in K$. Clearly $\Card{B}\leq k$.
  Let $X\subseteq \var(F)\setminus B$ such that $x^1\in K$.  We show
  that $r_X(F-B) \in \HORN$.  Let $x^\delta,y^\epsilon$ be two literals
  that belong to a clause $C''$ of $r_X(F-B)$. We show that $\epsilon=0$
  or $\delta=0$. Let $C'\in F-B$ the clause that corresponds to $C''$,
  and let $x^{\epsilon'},y^{\delta'}\in C'$. It follows that
  $x^{\epsilon'}y^{\delta'}\in E$, and since $K$ is a vertex cover of
  $G$, $x^{\epsilon'}\in K$ or $y^{\delta'}\in K$.  If $x^{\epsilon'}\in
  K$ then $\epsilon=0$, if $y^{\delta'}\in K$ then $\delta=0$.  Since
  $x^\delta,y^\epsilon\in C''\in r_X(F-B)$ were chosen arbitrarily, we
  conclude that $r_X(F-B)\in \HORN$.  Hence Claim~1 is shown.

  Mishra \etal \cite{MishraRSSS07} already observed that a reduction
  from \cite{ChenKanj05} can be adapted to show that this above-guarantee
  vertex cover problem can be reduced to \textsc{2SAT Deletion}. For
  completeness, we give a reduction here as well.

  We construct a 2CNF formula $F_2$ from $G$.  For each vertex
  $x^\epsilon$ of $G$ we take a variable $x_\epsilon$.  For each edge
  $x^0x^1\in M$ we add a negative clause $\{\neg x_0, \neg x_1\}$, and
  for each edge $x^\epsilon y^\delta\in E\setminus M$ we add a positive
  clause $\{x_\epsilon,y_\delta\}$.

  \emph{Claim 2.} $G$ has a vertex cover with at most $\Card{M}+k$
    vertices \myiff we can delete at most $k$ negative clauses
    from $F_2$ to obtain a satisfiable formula.

  ($\Rightarrow$) Let $K$ be a vertex cover of $G$.  We delete from
  $F_2$ all negative clauses $\{\neg x_0, \neg x_1\}$ where both
  $x_0,x_1\in K$ (there are at most $k$ such clauses) and obtain a 2CNF
  formula $F_2'$.  We define a truth assignment $\tau \in
  2^{\var(F_2')}$ by setting a variable to 1 \myiff it belongs
  to $K$. It remains to show that $\tau$ satisfies $F_2'$.  The negative
  clauses are satisfied since $\tau$ sets exactly one literal of a
  negative clause $\{\neg x_0, \neg x_1\}\in F_2'$ to~1 and exactly one
  to~0.  The positive clauses are satisfied since each positive clause
  $\{x_\epsilon,y_\delta\}$ corresponds to an edge $x_\epsilon
  y_\delta\in E$, and since $K$ is a vertex cover, $\tau$ sets at least
  one of the variables $x_\epsilon,y_\delta$~to~1.

  ($\Leftarrow$) Let $F_2'$ be a satisfiable formula obtained from $F_2$
  by deleting at most $k$ negative clauses. Let $D = \SB x\in \var(F)
  \SM \{\neg x_0, \neg x_1\} \in F_2\setminus F_2' \SE$.  Let $\tau$ be
  a satisfying truth assignment of $F_2'$. We define a set $K$ of
  vertices of $G$ by setting $K=\SB x^0,x^1 \SM x\in D \SE \cup \SB
  x^{\tau(x)} \SM x \in \var(F)\setminus D \SE$, and we observe that
  $\Card{K}\leq \Card{M}+k$. It remains to show that $K$ is a vertex
  cover of $G$. Consider an edge $e=x^0x^1\in M$ of the first group. If
  $x\in D$ then $x^0,x^1\in K$; if $x\notin D$ then $x^{\tau(x)} \in K$,
  hence $e$ is covered by $K$. Now consider an edge $f=x^\epsilon
  y^\delta\in E\setminus M$ of the second group. If $x\in D$ or $y\in
  D$ then $f$ is clearly covered by $K$. Hence assume $x,y\notin D$.
  By definition, there is a positive clause $\{x_\epsilon,
  y_\delta\}\in F_2'\subseteq F_2$. Since $\tau$ satisfies $F_2'$, it
  follows that $\tau(x_\epsilon)=1$ or
  $\tau(y_\delta)=1$. Consequently $x^{\epsilon}\in K$ or
  $y^{\delta}\in K$, thus $K$ covers~$f$.
  Hence Claim~2 is shown. 

  Next we modify $F_2$ by replacing each positive clause
  $C=\{x_\epsilon,y_\delta\}$ with $2k+2$ ``mixed'' clauses
  $\{x_\epsilon,z_C^i\}, \{\neg z_C^i, y_\delta\}$, for $1\leq i \leq
  k+1$, where the $z_C^i$ are new variables. Let $F_2^*$ denote the 2CNF
  formula obtained this way from $F_2$. 

  \emph{Claim 3.} We can delete at most $k$ negative clauses from $F_2$
    to obtain a satisfiable formula \myiff we can delete at most
    $k$ clauses from $F_2^*$ to obtain a satisfiable formula.

  The claim follows easily from the following considerations.  We
  observe that each pair of mixed clauses $\{x_\epsilon,z_C^i\}, \{\neg
  z_C^i, y_\delta\}$ is semantically equivalent with
  $C=\{x_\epsilon,y_\delta\}$. Hence, if $F_2$ can be made satisfiable
  by deleting some of the negative clauses, we can also make $F_2^*$
  satisfiable by deleting the same clauses. However, deleting some of
  the mixed clauses does only help if we delete at least one from each
  of the $k+1$ pairs that correspond to the same clause $C$. Hence also
  Claim~3 is shown true.
  Claims 1--3 together establish the lemma.
\end{proof}

Razgon and O'Sullivan's result \cite{RazgonOSullivan09} together with
Lemma~\ref{lem:rhorn-2satdel} immediately give the following.
\begin{proposition}
\textsc{Deletion  $\RHORN$-Backdoor Set Detection}   is fixed-parameter
tractable.
\end{proposition}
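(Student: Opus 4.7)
My plan is to combine the two ingredients that have just been put in place, so the proof is essentially a one-line composition of existing results. First I would recall that Razgon and O'Sullivan~\cite{RazgonOSullivan09} proved that \textsc{2SAT Deletion}, parameterized by the number $k$ of clauses allowed to be removed, is fixed-parameter tractable. Second, Lemma~\ref{lem:rhorn-2satdel} supplies a parameterized reduction from \textsc{Deletion $\RHORN$-Backdoor Set Detection} to \textsc{2SAT Deletion} where the parameter $k$ is preserved (the constructed 2CNF formula $F_2^*$ is polynomial in the size of $F$, and the question becomes whether one can delete $k$ clauses to make $F_2^*$ satisfiable).

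The proof I would write is therefore: given an instance $(F,k)$ of \textsc{Deletion $\RHORN$-Backdoor Set Detection}, apply the construction from the proof of Lemma~\ref{lem:rhorn-2satdel} to produce in polynomial time an instance $(F_2^*,k)$ of \textsc{2SAT Deletion}, and decide the latter using Razgon and O'Sullivan's fpt algorithm. Correctness is immediate from the chain of equivalences established by Claims 1--3 in the proof of the lemma: $F$ has a deletion $\RHORN$-backdoor set of size at most $k$ iff $G$ has a vertex cover of size at most $|M|+k$ iff $F_2$ admits a deletion of at most $k$ negative clauses yielding satisfiability iff $F_2^*$ admits a deletion of at most $k$ arbitrary clauses yielding satisfiability.

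Since no new combinatorial work is required, there is no real obstacle to overcome here; all of the technical content resides in Lemma~\ref{lem:rhorn-2satdel} (whose most delicate part, the duplication into $2k+2$ mixed clauses in Claim~3, was needed precisely so that the \textsc{2SAT Deletion} formulation, which allows deletion of \emph{any} clause rather than only negative ones, is equivalent to the original) and in the cited fpt algorithm for \textsc{2SAT Deletion}. Hence the proposition follows.
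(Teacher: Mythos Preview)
Your proposal is correct and matches the paper's proof exactly: the paper simply states that Razgon and O'Sullivan's result together with Lemma~\ref{lem:rhorn-2satdel} immediately give the proposition. Your write-up just spells out this composition in slightly more detail, but the approach is identical.
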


\section{Permissive Problems}

We consider any function $p$ that assigns nonnegative integers to CNF
formulas as a \emph{satisfiability parameter}. 
In particular we are
interested in such satisfiability parameters~$p$ for which the 
following parameterized problem is fixed-parameter tractable:
\begin{quote}
  $\SAT(p)$
  
  \emph{Instance}: A CNF formula $F$ and an integer $k\geq 0$.

  \emph{Parameter}: The integer $k$.
  
  \emph{Task}:  Determine whether $F$ is satisfiable or determine
  that $p(F)>k$.
\end{quote}
Note that an algorithm that solves the problem has the freedom of
deciding the satisfiability of some formulas $F$ with $p(F)>k$, hence
the exact recognition of formulas $F$ with $p(F)\leq k$ can be avoided.
Thus $\SAT(p)$ is not a usual decision problem, as there are three
different outputs, not just two.  If $\SAT(p)$ is fixed-parameter
tractable then we call $p$ an fpt satisfiability parameter, and we say
that ``the satisfiability of CNF formulas of bounded $p$ is fixed-parameter
tractable'' (cf.~\cite{Szeider04b}). We write $\THREESAT(p)$ if the
input is restricted to 3CNF formulas.

Backdoor sets provide a generic way to define
satisfiability parameters.  Let $\CCC$ be a base class and $F$ a CNF
formula. We define $\wb_\CCC(F)$, $\sb_\CCC(F)$ and $\db_\CCC(F)$ as the size
of a smallest weak, strong, and deletion $\CCC$\hy backdoor set of $F$,
respectively. 

Of course, if the detection of the respective $\CCC$\hy backdoor set is
fixed-parameter tractable, then $\wb_\CCC$, $\sb_\CCC$, and $\db_\CCC$
are fpt satisfiability parameters.  
However, it is possible that
$\wb_\CCC$, $\sb_\CCC$, or $\db_\CCC$ are fpt satisfiability parameters
but the corresponding $\CCC$\hy backdoor set detection problem is
$\W[1]$\hy hard. The problems $\SAT(\wb_\CCC)$, $\SAT(\sb_\CCC)$, and
$\SAT(\db_\CCC)$ can therefore be considered as more ``permissive''
versions of the ``strict'' problems \textsc{Weak}, \textsc{Strong}, and
\textsc{Deletion $\CCC$\hy Backdoor Set Detection}, the latter require
to find a backdoor set even if the given formula is trivially seen to be
satisfiable or unsatisfiable.  The distinction between permissive and
strict versions of problems have been considered in a
related context by Marx and
Schlotter~\cite{MarxSchlotter10,MarxSchlotter10b} for parameterized
$k$-neighborhood local search.  Showing hardness for permissive problems
$\SAT(p)$ seems to be a much more difficult task than for the strict
problems. So far we could establish only few such hardness results.

\begin{proposition}
  $\SAT(\wb_\CCC)$  is $\W[1]$-hard for
  all $\CCC\in \SCHAEFER \cup \{\RHORN\}$.
\end{proposition}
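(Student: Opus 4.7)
The plan is to reduce from \textsc{Clique} parameterized by the clique size~$k$, which is $\W[1]$-complete. Given $(G,k)$, I construct the standard CNF encoding $F_G$ of ``$G$ has a $k$-clique'' using variables $x_{i,v}$ ($i\in\{1,\dots,k\}$, $v\in V(G)$): a positive clause $\bigvee_v x_{i,v}$ per position, together with the usual binary negative clauses enforcing at-most-one-vertex-per-position, distinct vertices across positions, and no non-adjacent pair. It is standard that $F_G$ is satisfiable \myiff $G$ contains a $k$-clique.

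The key technical step is to show that $\wb_\CCC(F_G)\leq k$ whenever $G$ has a $k$-clique, for every $\CCC\in\{\HORN,\TWOCNF,\ZEROV,\RHORN\}$. Given a clique $\{v_1,\dots,v_k\}$, the candidate backdoor is $B=\{x_{i,v_i}:1\leq i\leq k\}$ with $\tau$ assigning every variable in $B$ to~$1$: the positive clauses are satisfied and disappear, while the remaining clauses are negative and of length at most~$2$, which places $F_G[\tau]$ simultaneously in $\HORN$, $\TWOCNF$, and $\ZEROV$. The clique property ensures that no surviving clause collapses to the empty clause, so the all-zero assignment on the remaining variables satisfies $F_G[\tau]$. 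The case $\CCC=\RHORN$ is immediate from $\HORN\subseteq\RHORN$. For the dual classes $\HORN^-$ and $\ONEV$ the natural encoding has the wrong polarity, so I would instead use the dual encoding in which every $x_{i,v}$ is replaced by $\neg y_{i,v}$, and then set the corresponding $y$-variables to~$0$; by a symmetric argument, the reduced formula becomes purely positive of clause length at most~$2$, and thus lies in both $\HORN^-$ and $\ONEV$.

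Converting the backdoor bound into a hardness proof for the permissive problem relies on a small but essential observation: on a YES instance $(G,k)$, the formula $F_G$ is satisfiable and $\wb_\CCC(F_G)\leq k$, so any algorithm correctly solving $\SAT(\wb_\CCC)$ must output \textsc{Sat} (it cannot output \textsc{UnSat}, and the verdict ``$\wb_\CCC(F_G)>k$'' would also be incorrect); on a NO instance $F_G$ is unsatisfiable, so the algorithm cannot output \textsc{Sat}. Hence \textsc{Clique} reduces to checking whether the output of the hypothetical fpt algorithm is \textsc{Sat}, contradicting the $\W[1]$-hardness of \textsc{Clique}.

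The main obstacle I anticipate is juggling polarity conventions across the six base classes: the two dual cases ($\HORN^-,\ONEV$) require the negated encoding, and for each class one has to verify separately that $F_G[\tau]$ lies in $\CCC$ and is satisfiable, though all six verifications are routine case analyses on clause types. Unlike Proposition~\ref{pro:weak-hard}, this argument does not need the $(k+1)$-fold duplication of gadgets, because we only require the backdoor-size bound on YES instances---the NO instances are simply unsatisfiable, which already suffices to foil the algorithm.
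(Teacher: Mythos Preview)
Your proof is correct and follows essentially the same idea as the paper: reduce from a clique problem, let the backdoor consist of the variables corresponding to the clique vertices, set them all to~$1$, and observe that what survives is an anti-monotone 2CNF formula (hence simultaneously in $\HORN$, $\TWOCNF$, $\ZEROV$, $\RHORN$), with the dual classes handled by swapping polarities.

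The only real difference is the encoding. The paper reduces from \textsc{Partitioned (Multicolored) Clique}: the variables are the vertices themselves, each colour class $V_i$ contributes one long positive clause, and the negative binary clauses are exactly the non-edges. Your reduction is from plain \textsc{Clique} with position-indexed variables $x_{i,v}$, which forces you to add at-most-one-per-position and distinctness clauses on top of the non-adjacency clauses. Both work; the paper's encoding is a bit cleaner (fewer clause types to verify, and it yields the uniform statement ``$\W[1]$-hard for every base class containing all anti-monotone 2CNF formulas''), while yours is the more traditional SAT encoding of \textsc{Clique}. Your explicit argument that a correct $\SAT(\wb_\CCC)$ algorithm is forced to output \textsc{Sat} on YES instances and forbidden to on NO instances is exactly right; the paper makes the same point implicitly via the three-way equivalence $(1)\Leftrightarrow(2)\Leftrightarrow(3)$.
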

\begin{proof}
  We will show a more general result, that $\W[1]$-hardness holds for
  all base classes that contain all anti-monotone 2CNF
  formulas. A CNF formula is \emph{anti-monotone} if all its clauses are negative.
  Let $\CCC$ be a base class that contains all anti-monotone 2CNF
  formulas.

  We show that $\SAT(\wb_\CCC)$ is $\W[1]$-hard by reducing from
  \textsc{Partitioned Clique}, also known as \textsc{Multicolored
    Clique}. This problem takes as input a $k$-partite graph and asks
  whether the graph has a clique on $k$ vertices. The integer $k$ is the
  parameter. The problem is well-known to be $\W[1]$\hy complete
  \cite{Pietrzak03}.

  Let $H=(V,E)$ with $V=\bigcup_{i=1}^k V_i$ be an instance of this
  problem.  We construct a CNF formula $F$ as follows.  We consider the
  vertices of $H$ as variables and add clauses $\{\neg u,\neg v\}$ for
  any two distinct vertices 
  %that either (i)~belong to the same partite
  %set, or (ii)~belong to different partite sets and $uv\notin E$.
  such that $uv\notin E$.
  For each $1\leq i \leq k$, we add the clause $V_i$.
  This completes the construction of~$F$.

  \noindent We show that the following statements are equivalent:
  \begin{enumerate}
  \item[(1)] $F$ is satisfiable
  \item[(2)] $H$ contains a $k$\hy clique.
  \item[(3)] $F$ has a weak $\CCC$\hy backdoor set of size at most~$k$.
  \end{enumerate}
  
  (1)$\Rightarrow$(2).  Let $\tau$ be a satisfying assignment of $F$.
  Because of the clause $V_i$, $\tau$ sets at least one variable of $V_i$ to
  $1$, for each $1\leq i \leq k$. 
  %Because of the clauses added in step(ii),
  As each $V_i$ is an independent set, $F$ contains a clause $\{\neg u,\neg v\}$
  for every two distinct vertices in $V_i$. Thus,
  $\tau$ sets exactly one variable of $V_i$ to $1$, for each
  $1\leq i \leq k$. The clauses of $F$ also imply
  that $v_iv_j\in E$ for each $1\leq i < j \leq k$, since
  otherwise $\tau$ would falsify the clause $\{\neg v_i, \neg
  v_j\}$. Hence $v_1,\dots,v_k$ induce a clique in $H$.

  (2)$\Rightarrow$(3).  Assume $v_1,\dots,v_k$ induce a clique in $H$,
  with $v_i\in V_i$. We show that $B=\{v_1,\dots,v_k\}$ is a weak
  $\CCC$\hy backdoor set of $F$.  Let $\tau\in 2^B$ be the truth
  assignment that sets all variables of $B$ to $1$. This satisfies
  all the clauses $V_i, 1\le i\le k$. Thus, $F[\tau]$ is an anti-monotone 2CNF
  formula. Therefore it is in $\CCC$ and it is satisfiable as it is 0-valid.
  Hence $B$ is a weak $\CCC$\hy backdoor set of $F$.

  (3)$\Rightarrow$(1). Any formula that has a weak backdoor set is
  satisfiable.

  Since all three statements are equivalent, we conclude that
  $\SAT(\wb_\CCC)$ is $\W[1]$-hard. This shows the proposition for
  the base classes $\HORN$, $\TWOCNF$, $\ZEROV$, and $\RHORN$, as
  they contain all anti-monotone 2CNF formulas.
  The hardness for $\HORN^-$ and $\ONEV$ follows by symmetric arguments
  from the hardness of $\HORN$ and $\ZEROV$, respectively.
\end{proof}

\bigskip

In general, if we have an \emph{fpt approximation algorithm}
\cite{CaiHuang10,ChenGroheGruebnerG06,DowneyFellowsMcCartin06} for a
strict backdoor set detection problem, then the corresponding permissive
problem $\SAT(p)$ is fixed-parameter tractable.  For instance, if we
have an fpt algorithm that, for a given pair $(F,k)$ either outputs a
weak, strong, or deletion $\CCC$\hy backdoor set of $F$ of size at most
$f(k)$ or decides that $F$ has no such backdoor set of size at most $k$,
then clearly $\wb_\CCC$, $\sb_\CCC$, and $\db_\CCC$, respectively, is an
fpt satisfiability parameter.

This line of
reasoning is used in the next theorem to show that $\sb_\FOREST$ is an
fpt satisfiability parameter. This result labels $\FOREST$ as the first
nontrivial base class $\CCC$ for which $\sb_\CCC$ is an fpt
satisfiability parameter and $\sb_\CCC\neq\db_\CCC$. Hence the
additional power of strong $\FOREST$\hy backdoor sets over deletion
$\FOREST$\hy backdoor sets is accessible.

\begin{theorem}[\cite{GaspersSzeider11a}]\label{the:strong-tree}
\sloppy \textsc{Strong $\FOREST$\hy Backdoor Set Detection} admits a $2^k$
   fpt-approximation.
   Hence $\SAT(\sb_\FOREST)$ is fixed-parameter tractable.
\end{theorem}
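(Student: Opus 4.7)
My plan has two parts: first deriving the FPT status of $\SAT(\sb_\FOREST)$ from the approximation, then sketching the approximation itself.

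The ``hence'' part is a one-liner: run the $2^k$-fpt-approximation. If it returns a strong $\FOREST$-backdoor $B$ of size at most $f(k) = 2^k \cdot k$, enumerate all $2^{|B|}$ truth assignments $\tau$ of $B$; for each one, $F[\tau]$ has a forest incidence graph, so its satisfiability can be checked in polynomial time. If the approximation reports $\sb_\FOREST(F) > k$, output the same. The total runtime is $2^{f(k)} \cdot \mathrm{poly}(|F|)$, which is FPT in $k$.

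For the approximation, I would split on the structure of the incidence graph $G$ of $F$. First, invoke Bodlaender's FPT algorithm~\cite{Bodlaender94} on $G$ with parameter $k' = 2k+1$, obtaining in FPT time either a feedback vertex set $X$ of $G$ of size $O(k^2)$, or $2k+1$ vertex-disjoint cycles $C_1,\dots,C_{2k+1}$ in $G$. In the FVS case, $G$ has treewidth $O(k^2)$, obtained by attaching $X$ to every bag of a width-$1$ tree decomposition of the forest $G-X$. The property ``$F$ has a strong $\FOREST$-backdoor of size at most $k$'' is expressible in MSO on $G$: existentially quantify a set $B\subseteq\var(F)$ with $|B|\le k$, universally quantify a subset $T\subseteq B$ (encoding the true-part of an assignment $\tau\in 2^B$), and require acyclicity of the subgraph of $G$ induced by $\var(F)\setminus B$ together with those clauses of $F$ containing no literal satisfied by $\tau$. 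By Courcelle's theorem~\cite{Courcelle90}, we solve this problem \emph{exactly} in FPT time, outputting a backdoor if one exists and declaring $\sb_\FOREST(F)>k$ otherwise.

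In the disjoint-cycles case, the key structural observation is this: for any strong backdoor $B$ with $|B|\le k$, at most $k$ of the $2k+1$ vertex-disjoint cycles contain a variable of $B$, so at least $k+1$ cycles $C_{i_1},\dots,C_{i_{k+1}}$ are disjoint from $B$. For each such cycle, the requirement $F[\tau]\in\FOREST$ for every $\tau\in 2^B$ forces every $\tau$ to satisfy some clause of the cycle; hence the set of $B$-literals appearing in the cycle's clauses must cover $2^B$. But a set of literals without a complementary pair cannot cover $2^B$ (the ``opposite'' assignment escapes), so some variable $x\in B$ must occur with both polarities in the cycle's clauses. Pigeonhole then yields an $x\in B$ that is such a ``bipolar killer'' for two distinct cycles. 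This suggests a recursive branching: compute the set $S$ of variables that appear with both polarities in the clauses of at least two of the $2k+1$ found cycles, branch by adding one $x\in S$ to the approximate backdoor together with the vertices of the cycles it covers, decrement $k$, and recurse. The approximation factor $2^k$ arises from the product of per-level branching choices in a depth-$k$ search tree.

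The hard part is carrying out the bookkeeping in the disjoint-cycles case so that the output size is provably bounded by $2^k\cdot k$. Although the structural argument ensures $S\ne\emptyset$ whenever $\sb_\FOREST(F)\le k$ (so correctness is not lost), obtaining a tight bound on the total number of variables added across the recursion requires an amortized analysis similar in spirit to, but technically distinct from, the bounded-search-tree proof of Theorem~\ref{the:weak-forest}. In particular, one must show that each branching step consumes enough of the cycle structure to justify its contribution to the output size and that the recursion does not blow up along any branch.
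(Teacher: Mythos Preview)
Your ``hence'' derivation and the bounded-treewidth branch via Courcelle's theorem are correct and essentially match the paper (the paper invokes Bodlaender with a larger $k'$, but that only matters for the other branch). The disjoint-cycles branch, however, has two genuine gaps that go beyond bookkeeping.

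First, your branching set $S$---all variables that occur with both polarities among the clauses of at least two of the $2k+1$ cycles---is not bounded by any function of $k$. Nothing prevents arbitrarily many variables from each having a positive occurrence in some clause of one cycle and a negative occurrence in some clause of another. Your pigeonhole argument correctly shows $S\cap B\neq\emptyset$ for every size-$k$ strong backdoor $B$, but branching over all of $S$ is not FPT. The paper avoids this by taking $k'=k^2 2^{k-1}+k+1$ vertex-disjoint cycles (not $2k+1$) and, after guessing which $k$ cycles may be killed internally, running a further case analysis on the remaining cycles (roughly: how many cycles a single variable can kill externally, and whether two variables kill many common cycles) that isolates at most \emph{two} candidate variables per guess. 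Only with this extra structure is the total branching set $S^*$ bounded by a function of~$k$.

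Second, the $2^k$ factor does not come from ``the product of per-level branching choices in a depth-$k$ search tree''; that product governs running time, not output size. In the paper, once a variable $s$ is selected, the algorithm recursively computes a strong $\FOREST$-backdoor set of $F[s{=}0]$ \emph{and} one of $F[s{=}1]$, each with parameter $k-1$, and returns $\{s\}$ together with the union of the two answers. The recurrence $T(k)\le 1+2T(k-1)$, $T(0)=0$, gives $T(k)\le 2^k-1$; that is the source of the approximation ratio. Your proposal instead adds $s$ (plus unspecified ``vertices of the cycles it covers'') and recurses once on $F$ itself with parameter $k-1$; this neither certifies the strong-backdoor property for both truth values of $s$ nor accounts for the claimed $2^k$ bound on the output.
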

\begin{proof}[Sketch] 
  We sketch the fpt-approximation algorithm from
  \cite{GaspersSzeider11a} which either concludes that a CNF formula $F$
  has no strong $\FOREST$-backdoor set of size $k$ or returns one of
  size at most $2^k$.  We refer to \cite{GaspersSzeider11a} for the full
  details and the correctness proof.  Let~$G$ denote the incidence graph
  of $F$.  The first step of the algorithm runs, similarly to the proof
  of Theorem~\ref{the:weak-forest}, the fpt algorithm (with parameter
  $k'$) by Bodlaender \cite{Bodlaender94} that either finds $k' = k^2
  2^{k-1}+k+1$ vertex-disjoint cycles in $G$ or a feedback vertex set of
  $G$ of size at most $12k'^2-27k'+15$.
 
  In case a feedback vertex set $X$ is returned, a tree decomposition of
  $G \setminus X$ of width~1 is computed and $X$ is added to each bag of
  this tree decomposition.  As the \textsc{Strong $\FOREST$-Backdoor Set
    Detection} problem can be defined in Monadic Second Order Logic, a
  meta-theorem by Courcelle \cite{Courcelle90} can be used to decide the
  problem in linear time using this tree decomposition.

  In case Bodlaender's algorithm returns $k'$ vertex-disjoint cycles,
  the algorithm finds a set $S^*$ of $O(k^{2k}2^{k^2-k})$
  variables such that every strong $\FOREST$-backdoor set of size $k$
  contains at least one variable from $S^*$.  In this case, the algorithm
  recurses by considering all possibilities of including a variable from
  $S^*$ in the backdoor set.
 
 Let $C_1, \ldots, C_{k'}$ denote the variable-disjoint cycles returned by Bodlaender's algorithm.
 Consider a variable $x \in \var(F)$ and a cycle $C$. We say that $x$ \emph{kills} $C$ \emph{internally}
 if $x\in C$. We say that $x$ \emph{kills} $C$ \emph{externally} if $x\notin C$ and $C$ contains two clause $u,v \in F$
 such that $x\in u$ and $\neg x\in v$. We say in this case that $x$ kills $C$ externally in $u$ and $v$.

 The algorithm goes through all $\binom{k'}{k}$ ways to choose $k$ cycles among $C_1, \ldots, C_{k'}$ that may be killed internally.
 All other cycles, say $C_1, \ldots, C_{k''}$ with $k'' = k'-k$, are not killed internally. We refer to these cycles as $C''$-cycles.
 The algorithm now computes a set $S\subseteq \var(F)$
 of size at most $2$
 such that any strong $\FOREST$-backdoor set of size $k$, which is a subset
 of $\var(F) \setminus \bigcup_{i=1}^{k''} \var(C_i)$,
 contains at least one variable from $S$. The union of all such $S$,
 taken over all choices of cycles to be killed internally, forms then
 the set $S^*$ that was to be computed.

 From now on, consider only killers in $\var(F) \setminus \bigcup_{i=1}^{k''} \var(C_i)$.
 For each $C''$-cycle $C_i$, consider vertices $x_i,u_i,v_i$ such that $x_i$ kills $C_i$ externally in $u_i$ and $v_i$
 and there is a path $P_i$ from $u_i$ to $v_i$ along the cycle $C_i$ such that if any variable kills $C_i$ externally in two clauses $u_i'$ and $v_i'$
 such that $u_i',v_i'\in P_i$, then $\set{u_i,v_i} = \set{u_i',v_i'}$. Note that any variable that does not kill $C_i$ internally,
 but kills the cycle $Cx_i = P_i\cup \set{x_i}$ also kills the cycle $C_i$ externally in $u_i$ and $v_i$. We refer to such external killers
 as \emph{interesting}.
 
 The algorithm executes the first applicable from the following rules.
 
\begin{description}
 \item[No External Killer] If there is an index $i, 1\le i\le k''$, such that $Cx_i$ has no external killer, then set $S := \set{x_i}$.
\item[Killing Same Cycles] If there are variables $y$ and $z$ and at least $2^{k-1}+1$ $C''$-cycles such that both $y$ and $z$ are interesting external killers of
 each of these $C''$-cycles,
 then set $S := \set{y,z}$.
 \item[Killing Many Cycles] If there is a variable $y$ that is an interesting external killer of at least $k\cdot 2^{k-1}+1$ $C''$-cycles, then set $S:=\set{y}$.
 \item[Too Many Cycles] Otherwise, set $S= \emptyset$.
\end{description}
For each $s\in S^*$ the algorithm calls itself recursively to
compute a strong $\FOREST$-backdoor set for $F[s=0]$ and for $F[s=1]$
with parameter $k-1$. If both recursive calls return backdoor sets, the
union of these two backdoor sets and $\{s\}$ is a strong $\FOREST$-backdoor set for
$F$. It returns the smallest such backdoor set obtained for all choices
of $s$, or \textsc{No} if for each $s\in S^*$ at least one recursive call
returned \textsc{No}.
\end{proof}

\section{Comparison of Parameters}

Satisfiability parameters can be compared \mywrt their
generality. Let $p,q$ be satisfiability parameters.  We say that $p$ is
\emph{at least as general} as $q$, in symbols $p\preceq q$, if there
exists a function $f$ such that for every CNF formula $F$ we have
$p(F)\leq f(q(F))$. Clearly, if $p \preceq q$ and $\SAT(p)$ is fpt, then so
is $\SAT(q)$. If $p \preceq q$ but not $q \preceq p$, then $p$ is
\emph{more general} than $q$. If neither $p \preceq q$ nor $q \preceq p$ then
$p$ and $q$ are \emph{incomparable}.

As discussed above, each base class $\CCC$ gives rise to
three satisfiability parameters $\wb_\CCC(F)$, $\sb_\CCC(F)$ and
$\db_\CCC(F)$.  If $\CCC$ is clause-induced, then $\sb_\CCC \preceq
\db_\CCC$; and if $\CCC\subseteq \CCC'$, then 
%$\sb_\CCC \preceq \sb_{\CCC'}$ and $\db_\CCC \preceq \db_{\CCC'}$.
$\sb_{\CCC'} \preceq \sb_{\CCC}$ and $\db_{\CCC'} \preceq \db_{\CCC}$.

By associating certain graphs with CNF formulas one can use graph
parameters to define satisfiability parameters. The most commonly used
graphs are the primal, dual, and incidence graphs.  The primal graph
of a CNF formula~$F$ has as vertices the variables of $F$, and two
variables are adjacent if they appear together in a clause. The dual
graph has as vertices the clauses of $F$, and two clauses $C,C'$ are
adjacent if they have a variable in common (\ie, if $\var(C)\cap
\var(C')\neq \emptyset$). The incidence graph, as already defined
above, is a bipartite graph, having as vertices the variables and the
clauses of $F$; a variable $x$ and a clause $C$ are adjacent if $x\in
\var(C)$.  The directed incidence graph is obtained from the incidence
graph by directing an edge $xC$ from $x$ to $C$ if $x\in C$ and from
$C$ to $x$ if $\neg x \in C$.

The treewidth of the primal, dual, and incidence graph gives fpt
satisfiability parameters, respectively. The treewidth of the incidence
graph is more general than the other two satisfiability
parameters~\cite{KolaitisVardi00}.  The clique-width of the three graphs
provides three more general satisfiability parameters. However, these
satisfiability parameters are unlikely fpt: It is easy to see that SAT
remains NP-hard for CNF formulas whose primal graphs are cliques, and
for CNF formulas whose dual graphs are cliques. Moreover, $\SAT$,
parameterized by the clique-width of the incidence graph is $\W[1]$\hy
hard, even if a decomposition is provided~\cite{OrdyniakPaulusmaSzeider10}.
However, the clique-width of
directed incidence graphs is an fpt satisfiability parameter which is
more general than the treewidth of incidence
graphs~\cite{CourcelleMakowskyRotics01,FischerMakowskyRavve06}.

How do fpt satisfiability parameters based on decompositions and fpt
satisfiability parameters based on backdoor sets compare to each other?

Each base class $\CCC$ considered above, except for  the class $\FOREST$,  contains CNF formulas whose directed
incidence graphs have arbitrarily large clique-width. Hence none of the
decomposition based parameters is at least as general as the parameters
$\sb_\CCC$ and $\db_\CCC$. On the other hand, taking the disjoint union of
$n$ copies of a CNF formula multiplies the size of backdoor sets by $n$
but does not increase the width. Hence no backdoor based parameter is
more general than decomposition based parameters.

Thus, almost all considered backdoor based fpt satisfiability
parameters are incomparable with almost all considered decomposition
based fpt satisfiability parameters. A notable exception is the
satisfiability parameter $\db_\FOREST$. It is easy
to see that the treewidth of the incidence graph of a CNF formula is
no greater than the size of a smallest deletion $\FOREST$\hy
backdoor set plus one, as the latter forms a feedback vertex set of the
incidence graph. Thus the treewidth of incidence graphs is a more general
satisfiability parameter than the size of a smallest deletion
$\FOREST$\hy backdoor sets.  However, one can construct CNF formulas
$F$ with $\sb_\FOREST(F)=1$ whose directed incidence graph has
arbitrarily large clique-width. Just take a formula whose incidence
graph is a subdivision of a large square grid, and add a further
variable $x$ such that on each path which is a subdivision of one edge
of the grid there is a clause containing $x$ and a clause containing
$\neg x$. Thus, the satisfiability parameter $\sb_\FOREST$, which is
fpt by Theorem~\ref{the:strong-tree}, is incomparable to all the
decomposition based satisfiability parameters considered above.

Figure~\ref{fig:diagram-parms} shows the relationship between some of
the discussed fpt satisfiability parameters.

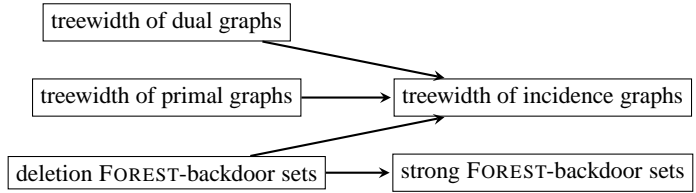
\begin{figure}
  \centering

   \begin{tikzpicture}

     \draw 
     (5,0)      node[draw] (twinc) {treewidth of incidence graphs} 
     (0,0)      node[draw] (twpri) {treewidth of primal  graphs} 
     (0,1)      node[draw] (twdua) {treewidth of dual graphs} 
     (0,-1)      node[draw] (deltree) {deletion $\FOREST$\hy backdoor sets}
     (5,-1)      node[draw] (strtree) {strong $\FOREST$\hy backdoor sets};

    \draw[thick,->,shorten >=1pt,>=stealth]    
    (twpri) edge (twinc)
    (twdua) edge (twinc)
    (deltree) edge (twinc)
    (deltree) edge (strtree);

   \end{tikzpicture}%

  \caption{Relationship between some fpt satisfiability parameters. An
    arrow from $A$ to $B$ means that $B$ is more general than $A$. If
    there is now arrow between $A$ and $B$ then $A$ and $B$ are
    incomparable.}
  \label{fig:diagram-parms}
\end{figure}

\section{Kernels}

The use of strong or deletion backdoor sets for SAT decision, \mywrt
a base class $\CCC$, involves two tasks:

\begin{enumerate}
\item \emph{backdoor detection}, to find a strong (or deletion) backdoor set
  of size at most $k$, or to report that such a backdoor set does not
  exist,
\item \emph{backdoor evaluation}, to use a given strong (or deletion) backdoor set of
  size at most $k$ to determine whether the CNF formula under
  consideration is satisfiable.
\end{enumerate}

In each case where backdoor detection is fixed-parameter tractable, one
can now ask whether the detection problem admits a polynomial
kernel. For instance, for the classes $\HORN$ and $\TWOCNF$, backdoor
detection can be rephrased as \textsc{Vertex Cover} or as
\textsc{3-Hitting Set} problems, as discussed above, and therefore
admits polynomial kernels \cite{ChenKanjJia01,Abukhzam10}.
% Serge: I removed the kernel claim for deletion $\FOREST$-backdoor
% sets for now as I don't know whether the known kernels can be adapted
% to our problem. The Weighted FVS problem has no polynomial kernel
% [Bodlaender, Jansen, Kratsch, STACS 2011].
% Also the
%detection of deletion $\FOREST$\hy backdoor sets admits a polynomial
%kernel because of the corresponding kernel for the feedback vertex
%set problem \cite{Thomasse10}.

Backdoor evaluation is trivially fixed-parameter tractable for any
base class, but it is unlikely that it admits a polynomial kernel.

\begin{proposition}[\cite{Szeider11a}]
  \textsc{$\CCC$\hy Backdoor Set Evaluation} does not admit a
  polynomial kernel for any self-reducible base class $\CCC$ unless
  $\NP \subseteq \coNP/\text{\normalfont poly}$.
\end{proposition}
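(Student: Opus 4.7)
The plan is to apply the OR-cross-composition framework of Bodlaender, Jansen, and Kratsch from $\SAT$ to \textsc{$\CCC$-Backdoor Set Evaluation} parameterized by $|B|$: since $\SAT$ is $\NP$-hard, such a cross-composition rules out a polynomial kernel unless $\NP\subseteq\coNP/\text{poly}$. Concretely, given $t$ CNF \SAT{} instances $F_1,\dots,F_t$ each of size at most $n$, I would produce in polynomial time a pair $(F,B)$ such that $B$ is a strong $\CCC$-backdoor set of $F$, $|B|$ is polynomially bounded in $n$, and $F$ is satisfiable \myiff some $F_i$ is. After a harmless deduplication we may assume $t\le 2^n$, so $s:=\lceil\log_2 t\rceil\le n$.

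The construction shares a single variable set $X=\set{x_1,\dots,x_n}$ across all $F_i$ (each $F_i$ is renamed to use $X$) and introduces $s$ fresh selector variables $Y=\set{y_1,\dots,y_s}$. Writing the index $i$ in binary as $i_1\cdots i_s$, every clause $C\in F_i$ is placed into $F$ as the bloated clause $C\cup\SB y_l^{1-i_l}\SM 1\le l\le s\SE$. A satisfying assignment of some $F_i$ extended by $y_l:=i_l$ satisfies $F$; conversely, any satisfying $\tau$ of $F$ picks $i:=\tau|_Y$, under which every clause originating from $F_i$ has all selector literals falsified and must therefore be satisfied by $\tau|_X$. Thus $F$ is satisfiable \myiff some $F_i$ is.

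Set $B:=X\cup Y$, so $|B|\le 2n$ is polynomial in $n$ as required. For every $\tau\in 2^B$, writing $i:=\tau|_Y$, each clause from $F_j$ with $j\ne i$ contains a true selector literal $y_l^{1-j_l}$ and is removed, while each clause from $F_i$ has all selector literals dropped and all $X$-literals assigned and therefore either vanishes or collapses to the empty clause. Hence $F[\tau]\in\set{\emptyset,\set{\emptyset}}$, and it remains to verify that both formulas belong to $\CCC$. Self-reducibility of $\CCC$ supplies this: any satisfiable $G\in\CCC$ yields $G[\sigma]=\emptyset\in\CCC$ under a satisfying assignment, and any unsatisfiable $G'\in\CCC$ yields $G'[\sigma']=\set{\emptyset}\in\CCC$ under a total assignment that falsifies some clause of $G'$. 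The main obstacle is the edge case in which $\CCC$ contains no unsatisfiable formula; this is sidestepped by padding each clause of $F$ with a fresh variable $z\notin B$ and adding the unit clause $\set{\neg z}$, so that the reducts become $\set{\set{\neg z}}$ or $\set{\set{z},\set{\neg z}}$ instead, both reachable by self-reduction from any sufficiently non-trivial member of $\CCC$. The satisfiability equivalence and the linear bound on $|B|$ are preserved, so the cross-composition carries through and the kernel lower bound follows.
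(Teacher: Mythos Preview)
Your cross-composition is sound in its main line, but the paper's proof is far shorter: it merely observes that for a self-reducible $\CCC$ the full variable set $\var(F)$ is always a strong $\CCC$-backdoor set of $F$ (every total reduct lies in $\{\emptyset,\{\emptyset\}\}\subseteq\CCC$), and then quotes the known result that $\SAT$ parameterized by the number of variables has no polynomial kernel unless $\NP\subseteq\coNP/\text{poly}$. That is a one-step polynomial parameter transformation. You instead redo the standard OR-composition that underlies that very result and simultaneously verify that the output is a legal evaluation instance; this works, but both arguments ultimately rest on the same small fact that $\emptyset,\{\emptyset\}\in\CCC$, and the paper spends it more economically.

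Your edge-case paragraph, however, is both unnecessary and incorrect. Unnecessary because self-reducibility already forces $\{\emptyset\}\in\CCC$ from \emph{any} nonempty $G\in\CCC$, satisfiable or not: pick any clause of $G$ (it contains no complementary pair, so it can be falsified) and apply a total assignment on $\var(G)$ doing so; the reduct is $\{\emptyset\}\in\CCC$. Hence the only self-reducible classes without $\{\emptyset\}$ are the degenerate ones contained in $\{\emptyset\}$, for which the evaluation problem is trivial anyway. Incorrect because your padding does not do what you claim: if $\CCC$ truly contained no unsatisfiable formula then $\{\{z\},\{\neg z\}\}$, being unsatisfiable, would not lie in $\CCC$, so $B$ would fail to be a strong $\CCC$-backdoor set of the padded formula; no self-reduction starting inside such a $\CCC$ can land on an unsatisfiable member of $\CCC$. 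Replace the edge-case discussion by the stronger observation above and your argument is clean.
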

This proposition is a trivial consequence of the well-known result
that $\SAT$ parameterized by the number of variables has no polynomial
kernel unless $\NP \subseteq \coNP/\text{\normalfont poly}$
\cite{BodlaenderDowneyFellowsHermelin09,FortnowSanthanam08}, and the
fact that $\var(F)$ is always a strong $\CCC$\hy backdoor set of $F$
if $\CCC$ is self-reducible.

Less immediate is the question whether  \textsc{$\CCC$\hy Backdoor Set
    Evaluation} admits a polynomial kernel if the inputs are restricted
  to 3CNF formulas, as 3SAT parameterized by the number of variables has
  a cubic kernel by trivial reasons.
However, for $\HORN$ and $\TWOCNF$ this question can be answered
negatively.

\begin{proposition}[\cite{Szeider11a}]
 \textsc{$\CCC$\hy Backdoor Set
    Evaluation} does not admit a polynomial kernel 
  for  $\CCC\in \{\HORN,\TWOCNF \}$
  unless $\NP \subseteq
  \coNP/\text{\normalfont poly}$, even if the input formula is in 3CNF.
\end{proposition}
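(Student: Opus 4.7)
The plan is to establish a polynomial kernel lower bound via OR-cross-composition (Bodlaender, Jansen, and Kratsch) from the NP-hard problem 3SAT. Concretely, I will compose $t$ instances $\phi_1,\dots,\phi_t$ of 3SAT, each on the same variable set $V=\{x_1,\dots,x_n\}$ (which is a natural polynomial equivalence relation), into a single instance $(F,B,k)$ of \textsc{$\CCC$-Backdoor Set Evaluation} where $F$ is in 3CNF, $B$ is a strong $\CCC$-backdoor set with $|B|=k=n+\lceil \log t\rceil$, and $F$ is satisfiable iff at least one $\phi_i$ is. Since the parameter $k$ is polynomial in $\max_i|\phi_i|+\log t$, a polynomial kernel for the target problem would give an OR-distillation of 3SAT and hence $\NP\subseteq\coNP/\text{poly}$.

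For the construction, introduce $\ell=\lceil\log t\rceil$ selector variables $s_1,\dots,s_\ell$. For each clause $C=\{\lit_1,\lit_2,\lit_3\}$ of $\phi_i$, with $b_1(i)\cdots b_\ell(i)$ the binary encoding of $i$, form the guarded clause
\begin{align*}
C' \;=\; \{s_1^{1-b_1(i)},\,\dots,\,s_\ell^{1-b_\ell(i)},\,\lit_1,\lit_2,\lit_3\}
\end{align*}
of length $\ell+3$, and split it into 3-clauses by the standard chain reduction with fresh auxiliary variables $z_1,\dots,z_\ell$ (one fresh set per guarded clause):
\begin{align*}
\{a_1,a_2,z_1\},\; \{\neg z_1,a_3,z_2\},\; \dots,\; \{\neg z_{\ell-1},a_{\ell+1},z_\ell\},\; \{\neg z_\ell,a_{\ell+2},a_{\ell+3}\},
\end{align*}
where $a_1,\dots,a_{\ell+3}$ enumerates the literals of $C'$. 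Take $B=V\cup\{s_1,\dots,s_\ell\}$.

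Two correctness claims must be verified. First, $F$ is satisfiable iff some $\phi_i$ is: setting the selectors to encode $i$ vacuously satisfies every guarded clause originating from $\phi_{i'}$ with $i'\neq i$ (some $s_k^{1-b_k(i')}$ evaluates to $1$), while the guarded clauses from $\phi_i$ reduce to the clauses of $\phi_i$ itself, which can be satisfied exactly when $\phi_i$ is; the chain clauses are jointly satisfiable by a suitable assignment to the $z$-variables whenever the original guarded clause is. Second, $B$ is a strong (hence also deletion, since both classes are clause-induced) $\CCC$-backdoor set: after fixing any assignment to $B$, each chain 3-clause either becomes satisfied and is removed, or reduces to a clause on at most two $z$-literals with at most one positive $z$-literal. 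Inspecting the three templates confirms this: $\{a_1,a_2,z_1\}$ retains at most one positive $z$-literal; a middle clause $\{\neg z_{j-1},a,z_j\}$ reduces to $\{\neg z_{j-1},z_j\}$ or disappears; and the terminal clause $\{\neg z_\ell,a_{\ell+2},a_{\ell+3}\}$ retains only a negative $z$-literal. Thus the reduced formula lies in both $\HORN$ and $\TWOCNF$.

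The main obstacle is the joint verification that the 3CNF splitting simultaneously preserves the $\HORN$ and $\TWOCNF$ backdoor properties for \emph{every} assignment to $B$; this dictates the precise shape of the chain reduction, since more symmetric $k$-SAT-to-3SAT splittings could easily introduce a clause with two positive auxiliary literals and destroy the Horn property. Granted these verifications, the cross-composition framework yields the claimed kernel lower bound for both $\CCC=\HORN$ and $\CCC=\TWOCNF$.
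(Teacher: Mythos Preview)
Your cross-composition argument is correct and is essentially the natural proof of this result; the survey itself does not include a proof but only cites \cite{Szeider11a}. Two small points worth tightening: you should pad $t$ to a power of two (or add unsatisfiable dummy instances for the unused selector patterns), since otherwise an assignment to the selectors that encodes an index $i>t$ makes every guarded clause trivially satisfied and breaks the backward direction of the equivalence; and it helps to state explicitly that the chain reduction, for any fixed assignment to the $a$-literals, is satisfiable over the $z$-variables if and only if the long clause $C'$ is satisfied, which is what drives both the equisatisfiability and the backdoor verification.

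The key design choice you flag---placing exactly one $B$-literal in each middle clause of the chain so that every reduct clause is simultaneously Horn and has at most two literals---is precisely the right observation, and it is what allows a single construction to serve both $\CCC=\HORN$ and $\CCC=\TWOCNF$. With the padding fix, the argument goes through cleanly.
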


\section{Backdoor Trees}

Backdoor trees are binary decision trees on
backdoor variables whose leaves correspond to instances of the base
class. Every strong backdoor set of size~$k$ gives rise to a backdoor
tree with at least~$k+1$ and at most~$2^k$ leaves.  It is reasonable
to rank the hardness of instances in terms of the number of leaves of
backdoor trees, thus gaining a more refined view than by just
comparing the size of backdoor sets.

Consider the CNF formula $F$ with variables~$x_1,\dots,x_{2n}$ and
$y_1,\dots,y_n$ consisting of all clauses of
the form
\[
\begin{array}{l}
\{y_i,\neg x_1,\dots,\neg x_{2i-2},
 x_{2i-1},
\neg x_{2i}, \dots, \neg x_{2n}\},\\
\{y_i,\neg x_1,\dots,\neg x_{2i-1},
 x_{2i}, \neg x_{2i+1}, \dots, \neg x_{2n}\},
\end{array}
\] 
for $1\leq i \leq n$.  The set $B=\{y_1,\dots,y_n\}$ is a strong
$\HORN$\hy backdoor set (in fact,~$B$ is the smallest possible).
However, every $\HORN$\hy backdoor tree~$T$
with~$\var(T)=\{y_1,\dots,y_n\}$ has $2^n$ leaves.  On the other hand,
the formula~$F$ has a $\HORN$\hy backdoor tree~$T'$ with only $2n+1$
leaves where~$\var(T')=\{x_1,\dots,x_{2n}\}$.  Thus, when we want to
minimize the number of leaves of backdoor trees, we must not restrict
ourselves to variables of a smallest strong backdoor set.

The problem \textsc{$\CCC$-Backdoor Tree Detection} now takes as input a
CNF formula~$F$, a parameter $k$, and asks whether $F$ has a
$\CCC$\hy backdoor tree with at most~$k$ leaves.
 
A base class $\CCC$ is said to admit a \emph{loss-free kernelization}
if there exists a polynomial-time algorithm that, given a CNF
formula~$F$ and an integer~$k$, either correctly decides that $F$ has no
strong $\CCC$\hy backdoor set of size at most~$k$, or computes a
set~$X\subseteq \var(F)$ such that the following conditions hold:
(i)~$X$ contains all minimal strong $\CCC$\hy backdoor sets of~$F$
  of size at most~$k$; and
(ii)~the size of~$X$ is bounded by a computable function that
depends on~$k$ only.

Samer and Szeider~\cite{SamerSzeider08b} have shown that \textsc{$\CCC$\hy
  Backdoor Tree Detection} is fixed-parameter tractable for every base
class $\CCC$ that admits a loss-free kernelization. Since Buss-type
kernelization is loss-free, the two classes $\HORN$ and $\TWOCNF$ admit
a loss-free kernelization.  Hence \textsc{$\CCC$\hy Backdoor Tree
  Detection} is fixed-parameter tractable for $\CCC\in
\{\TWOCNF,\HORN\}$.

\section{Backdoors for Problems Beyond NP}

The backdoor approach has been successfully applied to obtain
fixed-parameter tractability for problems whose unparameterized
worst-case complexity lies beyond~NP. In particular, FPT results have
been obtained for the $\#P$-complete problem Propositional Model
Counting, the PSPACE-complete QBF-SAT problem, and problems of
nonmonotonic reasoning and abstract argumentation that are located on
the second level of the Polynomial Hierarchy. In this section we briefly
survey these results.

\subsection{Propositional Model Counting}

The $\#\SAT$ problem asks to compute for a given CNF formula $F$ the
number of assignments $\tau \in 2^{\var(F)}$ that satisfy~$F$.  This
problem arises in several areas of Artificial Intelligence, in
particular in the context of probabilistic
reasoning~\cite{BacchusDalmaoPitassi03,Roth96}.  The problem is
$\#$P-complete and remains $\#$P-hard even for monotone 2CNF formulas and
Horn 2CNF formulas. It is $\NP$\hy hard to approximate the number of
satisfying assignments of a CNF formula with $n$ variables within
$2^{n^{1-\epsilon}}$ for any $\epsilon>0$.  This approximation hardness
holds also for monotone 2CNF formulas and Horn 2CNF
formulas~\cite{Roth96}.  However, if $\#\SAT$ can be solved in
polynomial time $O(n^c)$ for the formulas of a base class $\CCC$, and if we
know a strong $\CCC$\hy backdoor set of a formula $F$ of size $k$, then
we can compute the number of satisfying assignments of $F$ in time
$O(2^kn^c)$~\cite{NishimuraRagdeSzeider07,SamerSzeider08c}.  For some
applications in probabilistic reasoning one is interested in the
weighted model counting (WMC) problem, which is  more general than
$\#\SAT$ (see, e.g., \cite{SangBeameKautz05,ChaviraDarwiche08}). Since
the backdoor set approach applies also to the more general problem, we
will use it for the following discussions.

A \emph{weighting} $w$ of a CNF formula $F$ is a mapping $w$ that
assigns each variable $x\in \var(F)$ a rational number $0 \leq w(x) \leq
1$; this generalizes to literals by $w(\ol{x})=1-w(x)$ and to truth
assignments $\tau\in 2^X$ by $w(\tau)=\prod_{x\in X} w(x^{\tau(x)})$.
We define $\#_w(F)$ as the sum of the weights of all assignments $\tau\in
2^{\var(F)}$ that satisfy $F$.  The WMC problem asks  to compute
$\#_w(F)$ for a given CNF formula $F$ and weighting $w$. WMC is clearly
at least as hard as computing $\#(F)$ as we can reduce $\#\SAT$ to WMC
by using the weight 1/2 for all $n$ variables and multiplying the result
by $2^{n}$.  A strong $\CCC$\hy backdoor set $X$ of a CNF formula $F$ can be
used to compute $\#_w(F)$ via the equation
\[
\#_w(F)=\sum_{\tau \in 2^{X}} w(\tau) \cdot \#_w(F[\tau]). 
\] 
It is easy to see that WMC is polynomial for the base classes $\CLU$ and
$\FOREST$ as the corresponding algorithms for deciding satisfiability
for these classes as discussed above allow a straightforward
generalization to WMC.  From Theorem~\ref{the:strong-tree} and
Proposition~\ref{pro:clu-deletion} we conclude that WMC is fixed-parameter
tractable parameterized by $\sb_{\FOREST}$ and $\db_{\CLU}$.

\subsection{Quantified Boolean Formulas}

\newcommand{\QHORN}{\text{\normalfont\textsc{QHorn}}}
\newcommand{\QTWOCNF}{\text{\normalfont\textsc{Q2CNF}}}

Many important computational tasks like planning, verification, and
several questions of knowledge representation and automated reasoning
can be naturally encoded as the evaluation problem of \emph{quantified
  Boolean formulas
  (QBF)}~\cite{OtwellRemshagenTruemper04,Rintanen99,SabharwalAnsoteguiGomesHartSelman06}.
A QBF consists of a propositional CNF formula $F$ (the ``matrix'') and a
quantifier prefix. For instance $\mathcal{F} = \forall y \,\forall z
\,\exists x \,\exists w \,F$ with $F = \{\{\neg x, y, \neg w\}$,
$\{x,\neg y, w\}$, $\{\neg y, z\}$, $\{y , \neg z\}\}$ is a QBF.  The
evaluation of quantified Boolean formulas constitutes a {\sc
  PSPACE}-complete problem and is therefore believed to be
computationally harder than the NP-complete propositional satisfiability
problem~\cite{KleineBuningLettman99,Papadimitriou94,StockmeyerMeyer73}.
Only a few tractable classes of quantified Boolean formulas are known
where the number of \emph{quantifier alternations} is unbounded.  For
example, the time needed to solve QBF formulas whose primal graph has
bounded treewidth grows non-elementarily in the number of quantifier
alternations~\cite{PanVardi06}.  Two prominent tractable classes with
\emph{unbounded} quantifier alternations are $\QHORN$ and $\QTWOCNF$
which are QBFs where the matrix is a Horn or 2CNF formula, respectively.
$\QHORN$ formulas and $\QTWOCNF$ formulas can be evaluated in polynomial
time due to well-known results of Kleine B\"{u}ning \etal
\cite{KleinebuningKarpinskiFlogel95} and of Aspvall \etal
\cite{AspvallPlassTarjan79}, respectively.
 
In order to evaluate a QBF formula with a small strong $\HORN$- or
$\TWOCNF$-backdoor set $X$ efficiently, we require that $X$ is closed
under variable dependencies. That is, if $x$ depends on $y$ and $x\in
X$, then also $y\in X$, where we say that $x$ depends on $y$ if the
quantifier for $y$ appears to the left of the quantifier for $x$, and
one cannot move the quantifier for $y$ to the right of $x$ without
changing the validity of the QBF. In general deciding whether a variable
depends on the other is PSPACE complete, but there are
``over-approximations'' of dependencies that can be computed in
polynomial time. Such over-approximations can be formalized in terms of
\emph{dependency schemes}.  Indeed, it is fixed-parameter tractable to
detect strong $\HORN$ or $\TWOCNF$\hy backdoor sets of size at most $k$
that are closed \mywrt any fixed polynomial-time decidable
dependency scheme \cite{SamerSzeider09a}. 
  This fpt
  result allows an unbounded number of quantifier alternations for each
  value of the parameter, in contrast to the results for parameter
  treewidth.

\subsection{Nonmonotonic Reasoning}
\label{subsec:asp}

\newcommand{\aspnot}{\neg}

Answer-Set Programming (ASP) is an increasingly popular framework for
declarative programming~\cite{MarekTruszczynski99,Niemela99}. ASP allows
to describe a problem by means of rules and constraints that form a
disjunctive logic program $P$ over a finite universe $U$ of atoms. A
rule $r$ is of the form $(x_1\vee \dots \vee x_l \leftarrow
y_1,\dots,y_n,\aspnot z_1,\dots,\aspnot z_m)$.  We write
$\{x_1,\dots,x_l\}=H(r)$ (the \emph{head} of $r$) and
$\{y_1,\dots,y_n,z_1,\dots,z_m\}=B(r)$ (the \emph{body} of $r$),
$B^+(r)= \{y_1,\dots,y_n\}$ and $B^-(r)= \{z_1,\dots,z_n\}$.  A set $M$
of atoms \emph{satisfies} a rule $r$ if
% $(H(r)\cup B^-(r)) \cap M \neq
%\emptyset$ or $B^+(r) \setminus M \neq \emptyset$.
$B^+(r)\subseteq M$ and $M\cap B^-(r) = \emptyset$ implies $M\cap H(r) \neq \emptyset$.
$M$ is a
\emph{model} of $P$ if it satisfies all rules of $P$.  The \emph{GL
  reduct} of a program $P$ under a set $M$ of atoms is the program $P^M$
obtained from $P$ by first removing all rules $r$ with $B^-(r)\cap M\neq
\emptyset$ and second removing all $\aspnot z$ where $z \in B^-(r)$ from
all remaining rules $r$~\cite{GelfondLifschitz91}. $M$ is an
\emph{answer set} of a program $P$ if $M$ it is a minimal model of
$P^M$.

For instance, from the program $P=\{(\text{tweety-flies} \leftarrow
\text{tweety-is-a-bird}, \linebreak[1] \aspnot \text{tweety-is-a-penguin}), \linebreak[1]
(\text{tweety-is-a-bird}\leftarrow )\}$ we may conclude that tweety-flies, since
this fact is contained in the only answer set $\{\text{tweety-is-a-bird},
\text{tweety-flies}\}$ of $P$.  If we add the fact tweety-is-a-penguin to the
program and obtain $P'=P\cup \{(\text{tweety-is-a-penguin}\leftarrow )\}$, then
we have to retract our conclusion tweety-flies since this fact is not
contained in any answer set of $P'$ (the only answer set of $P'$ is
$\{\text{tweety-is-a-bird},
\text{tweety-is-a-penguin}\}$). This \emph{nonmonotonic} behaviour that adding a fact may allow
fewer conclusions is typical for many applications in Artificial
Intelligence. The main computational
problems for ASP (such as deciding whether a program has a solution, or
if a certain atom is contained in at least one or in all answer sets)
are of high worst-case complexity and are located at the second level of
the Polynomial Hierarchy~\cite{EiterGottlob95}.

Also for ASP several islands of tractability are known, and it is
possible to develop a backdoor approach~\cite{FichteSzeider11}.  Similar
to SAT one can define partial truth assignments $\tau$ on a set of atoms and
solve a disjunctive logic program $P$ by solving all the reduced
programs $P[\tau]$. However, the situation is trickier than for
satisfiability.  Although every answer set of $P$ corresponds to an
answer set of $P[\tau]$ for some truth assignment $\tau$, the reverse
direction is not true.  Therefore, one needs to run a check for each
answer set of $P[\tau]$ whether it gives rise to an answer set of
$P$. Although this correctness check is polynomial, we must ensure
that we do not need to carry it out too often.  A sufficient condition
for bounding the number of checks is that we can compute all answer
sets of a program $P\in \CCC$ in polynomial time (``$\CCC$ is
enumerable''). In particular, this means that $P\in \CCC$ has only a
polynomial number of answer sets, and so we need to run the correctness
check only a polynomial number of times.

Several enumerable islands of tractability have been identified and
studied regarding the parameterized complexity of backdoor set
detection~\cite{FichteSzeider11}. For instance, programs where each rule
head contains exactly one atom and each rule body is negation-free are
well-known to have exactly one answer set. Such programs are called Horn
programs, and similar to satisfiability, one can use vertex covers to
compute backdoor sets \mywrt Horn.  Further enumerable islands
of tractability can be defined by forbidding cycles in graphs, digraphs,
and mixed graphs associated with disjunctive logic programs. Now, one
can use feedback vertex set (fvs) algorithms for the considered graphs
to compute backdoor sets: undirected fvs~\cite{DowneyFellows99},
directed fvs~\cite{ChenLiuLuOsullivanRazgon08}, and mixed fvs
\cite{BonsmaLokshtanov10}.  One can get even larger enumerable islands
of tractability by labeling some of the vertices or edges and by only
forbidding ``bad'' cycles, namely cycles that contain at least one
labeled edge or vertex. For the undirected case one can use subset
feedback vertex set algorithms to compute backdoor sets
\cite{CyganPilipczukPilipczukWojtaszczyk11,KawarabayashiKobayashi10}.
Currently it is open whether this problem is fixed-parameter tractable
for directed or mixed graphs.  Even larger islands can be obtained by
only forbidding bad cycles with an even number of labeled vertices or
edges \cite{Fichte11}. This gives rise to further challenging feedback
vertex set problems.

\subsection{Abstract Argumentation}

The study of arguments as abstract entities and their interaction in
form of \emph{attacks} as introduced by~Dung~\cite{Dung95} has become
one of the most active research branches within Artificial Intelligence,
Logic and
Reasoning~\cite{BenchcaponDunne07,BesnardHunter08,RahwanSimari09}.
Abstract argumentation provides suitable concepts and formalisms to
study, represent, and process various reasoning problems most
prominently in defeasible reasoning (see,
e.g.,~\cite{Pollock92,BondarenkoDungKowalskiToni97}) and agent
interaction (see, e.g., \cite{ParsonsWooldridgeAmgoud03}).  An abstract
argumentation system can be considered as a directed graph, where the
vertices are called ``arguments'' and a directed edge from $a$ to $b$
means that argument $a$ ``attacks'' argument $b$.

A main issue for any argumentation system is the selection of acceptable
sets of arguments, called extensions. Whether or not a set of arguments
is accepted is considered \mywrt certain properties of sets of
arguments, called semantics~\cite{BaroniGiacomin09}. For instance, the
\emph{preferred semantics} requires that an extension is a maximal set
of arguments with the properties that (i) the set is independent, and (ii)
each argument outside the set which attacks some argument in the set
is itself attacked by some argument in the set. Property~(i) ensures
that the set is conflict-free, property~(ii) ensures that the set
defends itself against attacks.

Important computational problems are to determine whether an argument
belongs to some extension (credulous acceptance) or whether it belongs
to all extensions (skeptical
acceptance)~\cite{DimopoulosTorres96,DunneBenchcapon02}.  For most
semantics, including the preferred semantics, the problems are located
on the second level of the Polynomial Hierarchy~\cite{Dunne07}.

It is known that the acceptance problems can be solved in polynomial
time if the directed graph of the argumentation framework is
\emph{acyclic}, \emph{noeven} (contains no even cycles),
\emph{symmetric}, or
\emph{bipartite}~\cite{Dung95,BaroniGiacomin09,CostemarquisDevredMarquis05,Dunne07}.
Thus, these four properties give rise to islands of tractability for
abstract argumentation, and one can ask whether a backdoor approach can
be developed to solve the acceptance problems for instances that are
close to an island. Here it is natural to consider deletion backdoor sets,
\ie, we delete arguments to obtain an instance that belongs to the
considered class. For the islands of acyclic, symmetric, and bipartite
argumentation frameworks we can find a backdoor using the
fixed-parameter algorithms for directed feedback vertex set
\cite{ChenLiuLuOsullivanRazgon08}, vertex cover \cite{DowneyFellows99}
and for graph bipartization~\cite{ReedSmithVetta04}, respectively.  For
finding a vertex set of size $k$ that kills all directed cycles of even
length we only know an XP algorithm which is based on a deep
result~\cite{RobertsonSeymourThomas99}.
 
However, it turns out that using the backdoor set is tricky and quite
different from satisfiability and answer set
programming~\cite{OrdyniakSzeider11}. The acceptance problems remain
\mbox{(co-)}NP-hard for instances that can be made symmetric or
bipartite by deleting one single argument.  On the other hand, if an
instance can be made acyclic or noeven by deleting $k$ arguments, then
the acceptance problems can be solved in time $3^kn^c$.  The base 3 of
the running time comes from the fact that 
the evaluation algorithm considers three
different cases for the arguments in the backdoor set:
(1)~the argument is in the acceptable set, (2)~the argument is
not in the set and is attacked by at least one argument from the set,
and (3)~the argument is
not in the set but is not attacked by any argument from the set.

\section{Conclusion}
 
Backdoor sets aim at exploiting hidden structures in real-world problem
instances.  The effectiveness of this approach has been investigated
empirically in
\cite{DilkinaGomesSabharwal07,FichteSzeider11,LiBeek11,SamerSzeider08b}
and in many cases, small backdoor sets were found for large industrial
instances.

As several backdoor set problems reduce to well-investigated core
problems from parameterized complexity, such as \textsc{Vertex Cover},
3-\textsc{Hitting Set}, \textsc{Feedback Vertex Set}, and their
variants, a few decades of focused research efforts can be used to
detect backdoor sets efficiently. Nevertheless, several questions remain
open. In particular, the parameterized complexity classification of
several permissive problems seems challenging. As discussed at the end
of Subsection \ref{subsec:asp}, the classification of variants of the
\textsc{Feedback Vertex Set} problem would also shed some light on
backdoor set detection problems in nonmonotonic reasoning.

We believe that more research in this direction is necessary if we want
to explain the good practical performance of heuristic SAT
solvers. Directions for future research could involve multivariate
parameterizations of backdoor problems and the consideration of
backdoors to combinations of different base classes.

\paragraph{Acknowledgment} We thank Ryan Williams for his comments on
an earlier version of this survey.
 
\bibliography{literature}
\bibliographystyle{plain}

\end{document}